\newtheorem{theorem}{Theorem}
\newtheorem{corollary}{Corollary}
\newcommand{\np}{{\textsf{NP}}}
\newcommand{\nph}{{\np}-{\textsf{hard}}}
\newcommand{\nphns}{{\np}-{\textsf{hardness}}}
\newcommand{\poly}{{\textsf{P}}}
\newcommand{\wah}{{\textsf{W[1]-hard}}}
\newcommand{\edge}[2]{(#1, #2)}
\newcommand{\fpt}{{\sf{FPT}}}
\newcommand{\no}{No}
\newcommand{\noins}{No-instance}
\newcommand{\bigo}{O}
\newcommand{\bigos}{O^*}
\newcommand{\abs}[1]{\left\vert #1 \right\vert}
\newcommand{\ver}[1]{{\sf{V}}(#1)} 
\newcommand{\onlyfull}[1]{}
\newcommand{\setmid}{:}
\newcommand{\cclique}[1]{\mathcal{K}(G)}
\newcommand{\prob}[1]{{\textsc{#1}}}
\newcommand{\EP}[3]{
\begin{center}
{\small
\begin{tabularx}{0.98\columnwidth}{ll}
\toprule
\multicolumn{2}{c}{\textsc{#1}} \\
\midrule
{\bf Given:}   & \parbox[t]{0.75\columnwidth}{#2\vspace*{0mm}}  \\
{\bf Question:}& \parbox[t]{0.75\columnwidth}{#3\vspace*{0mm}} \\
\bottomrule
\end{tabularx}
}
\end{center}
}
\title{A Refined Study of the Complexity of Binary Networked Public Goods Games}
\author {
Yongjie Yang\textsuperscript{\rm 1}
        Jianxin Wang\textsuperscript{\rm 2}\\
}
\begin{document}

\maketitle

\begin{abstract}
We study the complexity of several combinatorial problems in the model of binary networked public goods games. In this game, players are represented by vertices in a network, and the action of each player can be either investing or not investing in a public good. The payoff of each player is determined by the number of neighbors who invest and her own action. We study the complexity of computing action profiles that are Nash equilibrium, and those that provide the maximum utilitarian or egalitarian social welfare. We show that these problems are generally {\nph} but become polynomial-time solvable when the given network is restricted to some special domains, including networks with a constant bounded treewidth, and those whose critical clique graphs are forests.
\end{abstract}

\hyphenation{BNPG}
\hyphenation{BNPGs}
\hyphenation{PSNE}
\hyphenation{PSNEC}

\section{Introduction}
Binary networked public good games (BNPGs) model the scenario where players reside in a network and decide whether they invest in a public good. The value of the public good is determined by the total amount of investment and is shared by players in some specific way. Particularly, the payoff of each player is determined by the number of her neighbors who invest and her own action. BNPGs are relevant to several real-world applications. One example could be vaccination, where parents decide whether to vaccinate their children, and the public good here is herd immunity.

In a recent paper, Yu~et~al.~\shortcite{DBLP:conf/aaai/YuZBV20} explored the complexity of determining the existence of pure-strategy Nash equilibria (PSNE) in BNPGs. In particular, they showed that this problem is {\nph}. On the positive side, they derived polynomial-time algorithms for the case where the given network is a clique or a tree. Following their work, we embark on a more refined complexity study of BNPGs. Our main contributions are as follows.
\begin{itemize}
\item We fix a flaw in an {\nphns} proof in~\cite{DBLP:conf/aaai/YuZBV20}.

\item Besides PSNE (action) profiles, we consider also  profiles with the maximum utilitarian or egalitarian social welfare which are not the main focus of~\cite{DBLP:conf/aaai/YuZBV20}.

\item We show that the problem of computing profiles with the maximum utilitarian/egalitarian social welfare is {\nph}, and this holds even when the given network is bipartite and has a constant diameter.

\item For the problems studied, we derive polynomial-time algorithms restricted to networks whose critical clique graphs are forests, or whose treewidths are bounded by a constant. Note that the former class of graphs contains both trees and cliques, and the latter class contains trees. Therefore, our results widely extend the {\poly} results studied in~\cite{DBLP:conf/aaai/YuZBV20}.
\end{itemize}

{\bf{Related Works.}} BNPGs are a special case of graphical games proposed by Kearns, Littman, and Singh~\shortcite{DBLP:conf/uai/KearnsLS01}, where the complexity of computing Nash equilibria has been investigated in the literature~\cite{DBLP:conf/uai/KearnsLS01,DBLP:conf/sigecom/ElkindGG06}. Recall that graphical games are proposed as a succinct representation of $n$-player $2$-action games whose action space is $2^n$. In graphical games, players are mapped to vertices in a network and the payoff of each player is entirely determined by her own action and those of her neighbors. BNPGs are specified so that, in addition to one's own action, only the number of neighbors who invest matters for the payoff of this player. In other words, in BNPGs, every player treats her neighbors equally. An intuitive generalization of BNPGs where each player may take more than two actions have been considered by Bramoull\'{e} and Kranton~\shortcite{DBLP:journals/jet/BramoulleK07}. Additionally, BNPGs are also related to supermodular network games~\cite{ManshadiJohariAllerton2009} and best-shot games~\cite{HarrisonH1989,Carpenter2002,DBLP:journals/ai/LevitKGM18}. For more detailed discussions or comparisons we refer to~\cite{DBLP:conf/aaai/YuZBV20}.

\section{Preliminaries}
We assume that the reader is familiar with basic notions in graph theory and computational complexity, and, if not, we refer to~\cite{Douglas2000,DBLP:journals/interfaces/Tovey02}.

Let $G=(V, E)$ be an undirected graph. The vertex set of~$G$ is also denoted by~$\ver{G}$. The set of (open) neighbors of a vertex in $v\in V$ in~$G$ is denoted by $N_G(v)=\{v'\in V \setmid \edge{v}{v'}\in E\}$. The set of closed neighbors of~$v$ is $N_G[v]=N_G(v)\cup \{v\}$.
We define~$n_G(v)$ (resp.\ $n_G[v]$) as the cardinality of~$N_G(v)$ (resp.\ $N_G[v]$). The notion~$n_G(v)$ is often called the degree of~$v$ in~$G$ in the literature.
Given a subset $V'\subseteq V$, we use $n_G(v, V')=\abs{N_G(v)\cap V'}$ (resp.\ $n_{G}[v, V']=\abs{N_G[v]\cap V'}$) to denote the number of open neighbors (resp.\ closed neighbors) of~$v$ contained in the set~$V'$. The subgraph induced by~$V'\subseteq V$ is denoted by~$G[V']$, and $G-V'$ is the subgraph induced by $V\setminus V'$.

A BNPG~$\mathcal{G}$ is a $4$-tuple $(V, E, g_V, c)$. In the notion,~$V$ is a set of players, and~$(V, E)$ is an undirected graph with~$V$ being the vertex set. In addition,~$g_V$ is a set of  functions, one for each player. In particular, for every player $v\in V$ there is one function $g_v: \mathbb{N}_0 \rightarrow \mathbb{R}_{\geq 0}$ in~$g_V$ such that~$g_v(i)$, where~$i$ is a nonnegative integer not greater than~$n_G[v]$, measures the external benefits of the player~$v$ when exactly~$i$ of her {\bf{closed neighbors}} invest. Thus, the function~$g_v$ is called the externality function of~$v$. Finally, $c: V\rightarrow \mathbb{R}_{\geq 0}$ is a cost function where~$c(v)$ is the investment cost of player~$v\in V$.

For a function $f: A\rightarrow B$ and $A'\subseteq A$, we use~$f|_{A'}$ to denote the function~$f$ restricted to~$A'$, i.e., it holds that $f_{A'}: A' \rightarrow B$ such that $f_{A'}(v)=f(v)$ for all $v\in A'$. Sometimes we also write $f|_{\neg (A\setminus A')}$ for~$f|_{A'}$. For $V'\subseteq V$, a subgame of~$\mathcal{G}=(V, E, g_V, c)$ induced by~$V'$, denoted by $\mathcal{G}|_{V'}$, is the game $(V', E', g_{V'}, c|_{V'})$ such that $E'=\{\edge{u}{u'}\in E \setmid u, u'\in V'\}$ and $g_{V'}=\{g_{v} \setmid v\in V'\}$.

An (action) profile of a BNPG is represented by a subset consisting of the players who invest. Given a profile $\mathbf{s}\subseteq V$, the payoff of every player $v\in V$ is defined by
\[\mu(v, {\bf{s}}, \mathcal{G})=g_v(n_G[v, {\bf{s}}])-{\mathbf{1}}_{\bf{s}}(v) \cdot c(v),\]
where ${\mathbf{1}}_{\bf{s}}(\cdot)$ is the indicator function of ${\bf{s}}$, i.e., ${\bf{1}}_{\bf{s}}(v)$ is~$1$ if $v\in \bf{s}$ and is~$0$ otherwise.
If it is clear which BNPG is discussed, we drop\onlyfull{ the third parameter}~$\mathcal{G}$ from~$\mu(v, {\bf{s}}, \mathcal{G})$ for brevity.
\smallskip

{\bf{Nash equilibria.}}
In the setting of noncooperative game, it is assumed that players are all self-interested. In this case, profiles that are stable in some sense is of particular importance.
%
For a profile~${\bf{s}}$ and a player $v\in V$, let ${\bf{s}}(\neg v)$ be the profile obtained from~${\bf{s}}$ by altering the action of~$v$, i.e., $v \in {\bf{s}}(\neg v)$ if and only if $v\not\in {\bf{s}}$, and for every $v'\in V\setminus \{v\}$ it holds that $v'\in {\bf{s}}(\neg v)$ if and only if $v'\in {\bf{s}}$.
We say that a player $v\in V$ has an incentive to deviate under~${\bf{s}}$ if it holds that $\mu({\bf{s}}, v, \mathcal{G})< \mu({\bf{s}}(\neg v), v, \mathcal{G})$.
A profile~${\bf{s}}$ is a PSNE if none of the players has an incentive to deviate under~${\bf{s}}$.
\smallskip

{\bf{Social welfare.}}
In some other settings, there is an organizer, a leader, or an authority (e.g., in the game of vaccination the authority might be the government) who is able to coordinate the actions of the players in order to yield the maximum possible welfare for the whole society. We consider two types of social welfare as follows. Let $\mathcal{G}=(V, E, g_V, c)$ be a BNPG and~${\bf{s}}$ a profile of~$\mathcal{G}$.
\begin{description}
\item[Utilitarian social welfare (USW)] The USW of $\bf{s}$, denoted by ${\textsf{USW}}({\bf{s}}, \mathcal{G})$, is the sum of all utilities of all players:
\[{\textsf{USW}}({\bf{s}}, \mathcal{G})=\sum_{v\in V}\mu({\bf{s}}, v, \mathcal{G}).\]

\item[Egalitarian social welfare (ESW)] Under profiles with the maximum USW, it can be the case where some player receives extremely large utility while others obtain only little utility, which is considered unfair in some circumstances. Unlike USW, ESW cares about the player with the least payoff. Particularly, the ESW of~${\bf{s}}$ is
    \[{\textsf{ESW}({\bf{s}}, \mathcal{G})}=\min_{v\in V}\mu({\bf{s}}, v, \mathcal{G}).\]
\end{description}

{\bf{Problem formulation.}}
We study the following problems which have the same input $\mathcal{G}=(V, E, g_V, c)$. The notations of the problems and their tasks are as follows.
\begin{description}
\item \prob{PSNE Computation (PSNEC)}: Compute a PSNE profile of~$\mathcal{G}$ if there are any, and output ``{\no}'' otherwise.

\item \prob{USW/ESW Computation (USWC/ESWC)}: Compute a profile of~$\mathcal{G}$ with the maximum USW/ESW.
\end{description}

{\bf{Some hard problems.}} Our hardness results are established based on the following problems.

For a positive integer~$d$, a $d$-regular graph is a graph where every vertex has degree exactly~$d$.
\EP
{$3$-Regular Induced Subgraph ($3$-RIS)}
{A graph~$G$.}
{Does~$G$ admit a $3$-regular induced subgraph?}

The {\prob{$3$-RIS}} problem is {\nph}~\cite{DBLP:journals/tcs/BroersmaGP13,DBLP:journals/dam/CheahC90}.

A clique of a graph is a subset of vertices such that between every pair of vertices there is an edge.

\EP
{$\kappa$-Clique}
{A graph~$G$ and a positive integer~$\kappa$.}
{Is there a clique of size~$\kappa$ in~$G$?}

{\prob{$\kappa$-Clique}} is a well-known {\nph} problem~\cite{DBLP:conf/coco/Karp72}.
A vertex~$v$ dominates another vertex~$v'$ in a graph~$G$ if there is an edge between~$v$ in~$v'$ in~$G$. In addition, a subset~$A$ of vertices dominate another subset~$B$ of vertices if every vertex in~$B$ is dominated by at least one vertex in~$A$.
\EP
{Red-Blue Dominating Set (RBDS)}
{A bipartite graph $G=(B\uplus R, E)$ and a positive integer~$\kappa$.}
{Is there a subset $B'\subseteq B$ such that~$\abs{B'}\leq \kappa$ and~$B'$ dominates~$R$?}

{The {\prob{RBDS}} problem is {\nph}~\cite{garey}.}

We assume that in instances of {\prob{$\kappa$-Clique}} and {\prob{RBDS}}, the graph~$G$ does not contain any isolated vertices (vertices of degree one).

\section{BNPG Solutions in General}
In this section, we settle the complexity of the {\prob{PSNEC/USWC/ESWC}} problems.
We first point out a flaw in the proof of Theorem~10 in~\cite{DBLP:conf/aaai/YuZBV20}, where an {\nphns} reduction for {\prob{PSNEC}} restricted to fully-homogeneous BNPGs is established. \footnote{Due to space limitation they omitted the proof in~\cite{DBLP:conf/aaai/YuZBV20} but the proof is included in a full version posted online (\url{https://arxiv.org/pdf/1911.05788.pdf}). [19.11.2020]} Recall that a BNPG is fully-homogenous if both the externality functions and the investment costs of all players are the same. 
A profile is trivial if either all players invest or none of the players invests.
Checking whether a trivial profile is PSNE is clearly easy. 
Yu~et.~al.\ provide a reduction to show that determining whether a fully-homogenous BNPG admits a nontrivial PSNE is {\nph}. 
However, this reduction is flawed. Let us briefly reiterate the reduction, which is from the {\prob{$\kappa$-Clique}} problem. The instance of BNPG is obtained from an instance $(G=(V,E), \kappa)$ of {\prob{$\kappa$-Clique}} by first considering vertices in~$V$ as players and then adding a large set~$T$ of~$M$ players who form a clique and are adjacent to all players in~$V$. The externality functions and costs of players are set so that every player is indifferent between investing and not investing if exactly $\kappa-1+M$ of her open neighbors invest, otherwise, the player prefers to not investing. If there is a clique of size~$\kappa$, then there is a PSNE. However, the other direction does not work.
In fact, a nontrivial PSNE needs only the existence of a $(\kappa-1)$-regular induced subgraph of~$G$ (in this case, all players in~$M$ and the players in the regular subgraph invest). A more concrete counterexample is demonstrated in Figure~\ref{fig-reduction-flawed}. Our amendment is as follows.

\begin{figure}
\centering
{
\includegraphics[width=0.3\textwidth]{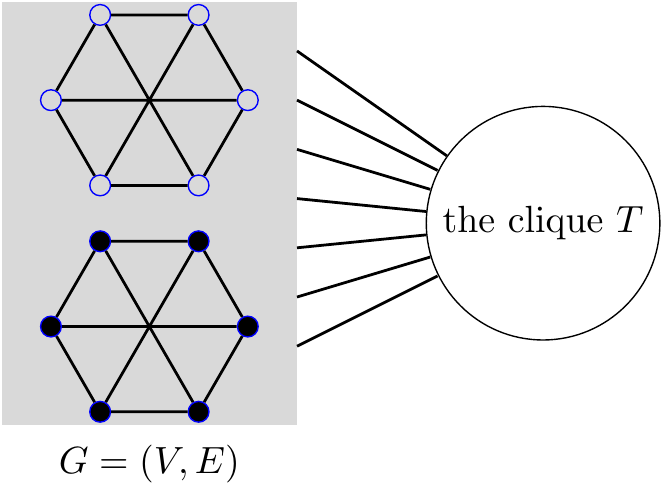}
}
\caption{A counterexample to the reduction in the proof of Theorem~10 in~\cite{DBLP:conf/aaai/YuZBV20}. The graph part~$G$ in the {\prob{$\kappa$-Clique}} instance  is a $3$-regular graph  without cliques of size~$3$, as shown in the gray area. Additionally, $\kappa=3$. The clique~$T$ contains more than~$12$ vertices. The profile consisting of the black-filled players in~$G$ and all those in~$T$ is a nontrivial PSNE.}
\label{fig-reduction-flawed}
\end{figure}


\begin{theorem}
Determining if a fully-homogeneous BNPG admits a nontrivial PSNE is {\nph}.
\end{theorem}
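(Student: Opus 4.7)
The plan is to repair Yu et al.'s reduction from {\prob{$\kappa$-Clique}} by replacing the externality function with a strict step, so that the PSNE condition becomes an \emph{equality} at the threshold rather than a one-sided inequality. Given an instance $(G=(V,E),\kappa)$, and assuming $|V|>\kappa$ (which is WLOG for {\prob{$\kappa$-Clique}}), I keep the same overall construction: let $V' = V \cup T$ where $T$ is a new clique of $M$ players with $M > |V|$, and add an edge between every vertex of $V$ and every vertex of $T$. To make the BNPG fully-homogeneous, I set $c(v)\equiv 1$ and define $g_v(i)=0$ for $i\le M+\kappa-1$ and $g_v(i)=1$ for $i\ge M+\kappa$, so that the marginal $g(i{+}1)-g(i)$ equals $c$ exactly at $i = M+\kappa-1$ and equals $0<c$ for every other $i$.

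Next I would observe that under this $g$, an investor $v$ has no incentive to deviate only when her open-neighbor investment count equals $M+\kappa-1$ \emph{exactly}, while a non-investor is content for every such count. For any nontrivial PSNE ${\bf s}$, put $T' = T\cap{\bf s}$ and $S = V\cap{\bf s}$. Applying the equality condition to a $T$-investor yields $(|T'|{-}1) + |S| = M+\kappa-1$, hence $|T'|+|S| = M+\kappa$; applying it to a $V$-investor $v\in S$ yields $|N_G(v)\cap S| = M+\kappa-1-|T'| = |S|-1$. Therefore every $v\in S$ is adjacent to every other vertex of $S$, so $S$ induces a clique of size $|S|\ge \kappa$ in $G$. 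Conversely, given any $\kappa$-clique $K$ in $G$, the profile $T\cup K$ satisfies the equality at every investor and is nontrivial because $|V|>\kappa$ forces $K\subsetneq V$.

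The main obstacle, and precisely the point at which the original reduction fails, is that with a smooth concave $g$ the PSNE condition degrades to the one-sided inequality ``investing count $\le M+\kappa-1$'' for investors, so a larger family of sets---including independent sets such as one side of $K_{3,3}$, whose outside vertices each happen to see enough investors---also certify a PSNE. Forcing the unit jump of $g$ to occur at a single $i$ fixes this, because the equilibrium condition for an investor now pins the count down to one value. A secondary technicality is disposing of the two would-be trivial profiles: $\emptyset$ is always a PSNE but is excluded from ``nontrivial'' by definition, while the all-invest profile $V'$ would require $(M-1)+|V| = M+\kappa-1$, i.e.\ $|V|=\kappa$, which is ruled out by the assumption. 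The construction is polynomial-time and the resulting game is fully-homogeneous by design.
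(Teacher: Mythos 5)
Your reduction is correct, but it takes a genuinely different route from the paper. The paper abandons Yu~et~al.'s {\prob{$\kappa$-Clique}} gadget altogether and reduces instead from {\prob{$3$-RIS}}: the network is the input graph itself, $c\equiv 2$, and $g(x)=x$ for $x\leq 3$ while $g(x)=x+1$ for $x\geq 4$, so that an investor is at equilibrium iff exactly three of her open neighbors invest and a non-investor is always content; nontrivial PSNEs then coincide exactly with nonempty vertex sets inducing $3$-regular subgraphs. You instead keep Yu~et~al.'s construction (the large clique~$T$ completely joined to~$V$) and repair it by replacing the smooth externality function with a unit step at $M+\kappa$, which turns the investor's equilibrium condition into the equality ``exactly $M+\kappa-1$ investing open neighbors'' and thereby forces every investing vertex of~$V$ to be adjacent to all other investing vertices of~$V$. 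Both proofs rest on the same mechanism---pinning investors to an exact neighbor count while leaving non-investors always content---but the paper buys simplicity by choosing a source problem whose solutions are literally the sets satisfying such an exact-count condition, whereas your version buys a direct repair of the published reduction from the more standard {\prob{$\kappa$-Clique}}. One sentence is missing from your backward direction: before applying the equality condition to ``a $T$-investor'' you must rule out $T\cap{\bf{s}}=\emptyset$. This is immediate---a $V$-investor with no investing neighbors in~$T$ sees at most $\abs{V}-1<M+\kappa-1$ investing open neighbors and would deviate, so any nonempty PSNE contains some player of~$T$---but it should be stated. With that line added the argument is complete.
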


\begin{proof}
We prove the theorem via a reduction from the {\prob{$3$-RIS}} problem. Let $(G, \kappa)$ be a {\prob{$3$}-RIS} instance.
The network is exactly~$G$. We set the externality functions and the costs of the players so that everyone is indifferent between investing and not investing if  exactly three of her open neighbors invest, and prefers to not investing otherwise. This can be achieved by setting, for every $v\in V$,  $c(v)=2$ and
\begin{equation*}
g_v(x)=
\begin{cases}
x & x\leq 3\\
x+1 & 4\leq x\leq n_G[v]\\
\end{cases}
\end{equation*}
The correctness is easy to see. If the graph~$G$ admits a $3$-regular subgraph induced by a subset~$H\subseteq V$, then~$H$ is a nontrivial PSNE. On the other hand, if there is a nontrivial PSNE $\emptyset\neq H\subseteq V$, due to the above construction, every vertex in~$H$ must have exactly~$3$ open neighbors in~$H$, and so~$G[H]$ is $3$-regular.
\end{proof}

Now we start the exploration on profiles providing the maximum social welfare. We show that the corresponding problems are all hard to solve, and this holds even when the given network is a bipartite graph with a constant bounded diameter. Recall that the diameter of a graph is the maximum possible distance between vertices, where the distance between two vertices is defined as the length of a shortest path between them.

\begin{theorem}
\label{thm-uswo-np-hard}
{\prob{USWC}} is {\nph}. This holds even when the given network is bipartite and has diameter at most~$4$.
\end{theorem}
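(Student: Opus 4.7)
The plan is to reduce from \prob{RBDS}, which the paper already establishes is {\nph} under the blanket assumption that the input graph has no degree-one vertex. Given an instance $(G = (B \uplus R, E), \kappa)$, I construct a BNPG whose network $G'$ is obtained from $G$ by adding a single hub vertex $h$ on the blue side, adjacent to every vertex of $R$. Then $G'$ is bipartite with parts $B \cup \{h\}$ and $R$, and its diameter is at most four: any two red vertices lie at distance two via $h$; any blue vertex has at least one red neighbor, so each red-blue pair is at distance at most three through $h$; and any two blue vertices are at distance at most four through two red vertices and $h$.

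I set the externality functions and costs as follows. For each $r \in R$, let $g_r(0) = 0$, $g_r(i) = 2$ for $i \ge 1$, and $c(r) = 3$; for each $b \in B$, let $g_b \equiv 0$ and $c(b) = 1$; for the hub, let $g_h \equiv 0$ and $c(h) = 2|R| + 1$. The key structural claim is that every profile maximizing USW is of the form $\mathbf{s} \subseteq B$. Indeed, each $r \in R$ gains at most $2$ from its own externality but pays $3$ to invest, and since $r$'s neighbors (blue players and $h$) have constant-zero externality functions, $r$ investing cannot help anyone else, so it is strictly suboptimal. For the hub, adding $h$ to a profile can boost the externalities of red vertices by at most $2|R|$ in total, but costs $c(h) = 2|R|+1$, so $h$ investing is also strictly suboptimal. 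Consequently every optimum has the form $B' \subseteq B$, and one verifies that $\textsf{USW}(B') = 2\,s(B') - |B'|$, where $s(B') = |\{ r \in R : N_G(r) \cap B' \neq \emptyset\}|$.

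The reduction is completed by showing that the maximum USW is at least $2|R| - \kappa$ iff the RBDS instance admits a dominating set of size at most $\kappa$. One direction is immediate: any dominating $B' \subseteq B$ with $|B'| \leq \kappa$ yields $\textsf{USW} = 2|R| - |B'| \geq 2|R| - \kappa$. For the converse, given any $B'$ with $2\,s(B') - |B'| \geq 2|R| - \kappa$, extend $B'$ by choosing one blue neighbor per undominated red vertex (such a neighbor exists by the no-pendant assumption on $G$); the resulting dominating set has size at most $|B'| + (|R| - s(B')) \leq (2s(B') - 2|R| + \kappa) + (|R| - s(B')) = \kappa - (|R| - s(B')) \leq \kappa$. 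Since solving \prob{USWC} lets one compare the maximum USW to the threshold $2|R| - \kappa$, \prob{USWC} is {\nph}. The main pitfall I would need to avoid is mis-calibrating $c(h)$: with $c(h)$ too small, the hub alone could dominate all of $R$ at tiny cost and collapse the reduction, which is precisely why $c(h)$ is set to strictly exceed the maximum externality that $h$ could ever induce.
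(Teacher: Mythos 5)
Your reduction is correct, but it takes a genuinely different route from the paper. The paper reduces from {\prob{$\kappa$-Clique}}: it builds an incidence structure of vertex-players and edge-players plus a distinguished player~$v^*$ whose externality function spikes exactly when $\frac{\kappa(\kappa-1)}{2}$ edge-players invest, and recovers the clique from the investing edge-players by arguing that an optimal profile must minimize the number of vertex-players they dominate. You instead reduce from {\prob{RBDS}} and encode the covering/budget trade-off directly into the utilitarian objective: each covered red player is worth~$2$, each investing blue player costs~$1$, so maximizing USW is exactly minimizing an affine function of the number of uncovered reds and the number of investors, and the threshold $2\abs{R}-\kappa$ separates yes- from no-instances via a clean extension argument (augment any near-optimal $B'$ by one blue neighbor per uncovered red; the arithmetic $\abs{B''}\leq \kappa - (\abs{R}-s(B'))\leq\kappa$ checks out). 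Your hub~$h$ plays a purely structural role, collapsing the diameter to~$4$, and your calibration $c(h)=2\abs{R}+1>2\abs{R}$ correctly keeps both~$h$ and all red players out of every optimum, since their neighbors all have constant-zero externalities. I verified the bipartiteness and diameter claims (both rely on the paper's no-isolated-vertices convention, which you invoke appropriately) and the identity $\textsf{USW}(B')=2s(B')-\abs{B'}$. Your argument is arguably tighter than the paper's: the paper's $(\Leftarrow)$ direction asserts somewhat informally that~$v^*$ must not invest and that exactly $\frac{\kappa(\kappa-1)}{2}$ edge-players must invest in an optimum, whereas your exchange arguments are fully explicit. What the paper's approach buys in exchange is a construction that parallels its discussion of the flawed clique reduction for PSNE; yours instead mirrors the paper's own {\prob{ESWC}} reduction (Theorem~\ref{thm-eswo-np-hard}), which also starts from {\prob{RBDS}} with an added universal player, so your proof unifies the two welfare hardness results under one source problem.
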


\begin{proof}
We prove the theorem via a reduction from the {\prob{$\kappa$-Clique}} problem to the decision version of {\prob{USWC}} which consists in determining whether there is a profile of utilitarian social welfare at least a threshold value.

Let $(H, \kappa)$ be a {\prob{$\kappa$-Clique}} instance, where $H=(U, E)$ is a graph, $n=\abs{U}$, and $m=\abs{E}$. We assume that~$\kappa$ is considerably smaller than~$m$, say $(k+1)^{10}<m$. As {\prob{$\kappa$-Clique}} is {\wah} with respect to~$\kappa$, this assumption does not change the hardness of the problem. We construct the following instance. For each vertex $u\in U$, we create a player denoted by~$v(u)$. The externality function of~$v(u)$ is defined so that $g_{v(u)}(0)=1$ and $g_{v(u)}(x)=0$ for positive integers~$x\leq n_H[u]$, and the investment cost of~$v(u)$ is $c(v(u))=1$. Let $V(U)=\{v(u) \setmid u\in U\}$ be the set of the vertex-players. In addition, for each edge $e\in E$, where $e=\edge{u}{u'}$, we create a player~$v(e)$. We define $g_{v(e)}(0)=1$ and $g_{v(e)}(x)=0$ for all~$x\in [4]$. Moreover, $c(v(e))=0$. Let $V(E)=\{v(e) \setmid e\in E\}$ be the set of edge-players. Finally, we create a player~$v^*$ such that $g_{v^*}(\frac{\kappa \cdot (\kappa-1)}{2})=m$ and $g_{v^*}(x)=0$ for all other possible values of~$x$. The investment cost of~$v^*$ is $c(v^*)=m$. The player network is a bipartite graph with the vertex partition $(V(U)\cup \{v^*\}, V(E))$. In particular, the edges in the network are as follows: for every edge $e=\edge{u}{u'}\in E$, the player~$v(e)$ is adjacent to exactly~$v(u)$,~$v(u')$, and~$v^*$, and thus has degree~$3$ in the network. It is clear that the network has exactly~$3m$ edges and has diameter at most~$4$.

The construction clearly can be done in polynomial time.
We claim that there is a clique of size~$\kappa$ in the graph~$H$ if and only if there is a profile of USW at least
\[q=(n-\kappa)+\left(m-\frac{\kappa\cdot (\kappa-1)}{2}\right)+m.\]

$(\Rightarrow)$ Assume that there is a clique $K\subseteq U$ of size~$\kappa$ in the graph~$H$. Let $E(K)=\{\edge{u}{u'}\in E \setmid u, u'\in K\}$ be the set of edges in the subgraph induced by~$K$. Clearly,~$E(K)$ consists of exactly $\frac{\kappa\cdot (\kappa-1)}{2}$ edges. Let ${\bf{s}}=\{v(e) \setmid e\in E(K)\}$ be the set of the $\frac{\kappa \cdot (\kappa-1)}{2}$ players corresponding to the edges in~$E(K)$. We claim that~${\bf{s}}$ has USW at least~$q$. From the above construction, the utility of each player~$v(u)$, where $u\in U$, is~$1$ if $u\not\in K$ and is~$0$ otherwise. Hence, the total utility of the vertex-players is exactly $n-\kappa$. In addition, the utility of a player~$v(e)$, $e\in E$, is~$1$ if $e\not\in E(K)$ and is~$0$ otherwise. Therefore, the total utility of edge-players is $m-\frac{\kappa\cdot (\kappa-1)}{2}$. Finally, the utility of the player~$v^*$ is exactly~$m$. The sum of the above utility is exactly~$q$.

$(\Leftarrow)$ Assume that there is a profile with USW at least~$q$. Due to the large investment cost of the player~$v^*$, to maximize the USW,~$v^*$ must not invest and, moreover, by the externality functions, exactly $\frac{\kappa\cdot (\kappa-1)}{2}$ edge-players must invest. Additionally, by the specific setting of the externality functions and the costs of the vertex-players, none of the vertex-players invests in any profile with the maximum USW. It follows that in a profile with the maximum USW, exactly $\frac{\kappa\cdot (\kappa-1)}{2}$ edge-players invest. Then, due to the setting of the externality functions, the smaller the number of vertex-players dominated by the investing edge-players, the larger is the USW. It is then easy to check that a profile achieves a USW at least~$q$ if and only if at most~$\kappa$ vertex-players are dominated by the investing edge-players. This implies that the edges whose corresponding players invest in a profile with USW at least~$q$ induce a clique in~$H$.
\end{proof}

\onlyfull{
If we seek a strategy with maximum social welfare under the restriction that there are at least~$k$ players invest, we have a W[1]-hard result with respect to the number of investors.

\begin{theorem}
Determining whether there is a strategy of size exactly/at least $n-k$ of utilitarian social welfare at least~$q$ is W[2]-hard with respect to~$k$. This holds even for fully-homogeneous case.
\end{theorem}

\begin{proof}
Reduction from dominating set restricted to~$\ell$-regular graphs. Each vertex is a player. Investing does not cost anything for all players. Moreover, for all players $v\in V$, it holds that $g_v(\ell+1)=0$ and $g_v(x)=1$ for all nonnegative integer $x\leq \ell$, and set $q=n$.

It can be also adapted for egalitarian social welfare by resetting $q=1$.
\end{proof}
}

For the computation of profiles with the maximum egalitarian social welfare, we have the same result.

\begin{theorem}
\label{thm-eswo-np-hard}
{\prob{ESWC}} is {\nph}. This holds even when the given network is bipartite with diameter at most~$4$ and all players have the same investment cost.
\end{theorem}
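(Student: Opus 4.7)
The plan is to adapt the reduction of Theorem~\ref{thm-uswo-np-hard} to the egalitarian setting while enforcing a uniform investment cost. The key idea is to engineer externality functions so that, in any profile with {\textsf{ESW}} at least a chosen threshold~$q$, the coordinator $v^*$ and all vertex-players are forced to abstain, after which the investing edge-players must correspond exactly to the edges of a $\kappa$-clique in the source graph.

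Starting from an instance $(H = (U, E), \kappa)$ of {\prob{$\kappa$-Clique}}, I reuse the bipartite network from Theorem~\ref{thm-uswo-np-hard}: a vertex-player $v(u)$ for each $u \in U$, an edge-player $v(e)$ for each $e \in E$, and a coordinator $v^*$, with each $v(e)$ (where $e = \edge{u}{u'}$) adjacent to $v(u)$, $v(u')$ and $v^*$. Fix $q = 1$ and set $c(v) = 1$ for every player. Define $g_{v(u)}(x) = q$ for $x \in \{0, \kappa - 1\}$ and $0$ otherwise; $g_{v^*}(\kappa(\kappa-1)/2) = q$ and $0$ otherwise; and $g_{v(e)}(0) = q$, $g_{v(e)}(1) = q + 1$, with $0$ elsewhere. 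The target equivalence is that $H$ has a clique of size~$\kappa$ if and only if the constructed BNPG admits a profile with {\textsf{ESW}} at least~$q$.

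The forward direction is a direct computation: the profile $\{v(e) \setmid e \in E(K)\}$ for a clique~$K$ of size~$\kappa$ gives every player utility exactly~$q$. For the converse, I would argue in three stages. Since $g_{v^*}$ and each $g_{v(u)}$ are bounded by~$q$, investing would yield utility at most $q - 1 < q$; hence in any ESW-$\geq q$ profile neither $v^*$ nor any vertex-player invests. The externality of~$v^*$ then forces exactly $\kappa(\kappa-1)/2$ edge-players to invest, while the vertex-player constraints force every $u \in U$ to be incident to either $0$ or $\kappa - 1$ investing edges. A handshake count on the $\kappa(\kappa-1)/2$ selected edges yields that exactly $\kappa$ vertices have degree $\kappa - 1$ in the induced subgraph, which must be $K_\kappa$.

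The delicate point is that making an investing edge-player reach utility~$q$ under a cost of~$1$ requires $g_{v(e)}(1) = q + 1$, which in principle would also reward a non-investing edge-player who has a single investing neighbor. This loophole is closed by the bipartite structure: once the first stage forces all vertex-players and $v^*$ to abstain, the only potential investors are edge-players, and since the network is bipartite no edge-player has an edge-player neighbor, so no edge-player ever sees an investing neighbor. Bipartiteness, diameter at most~$4$, and equal costs all transfer directly from the construction.
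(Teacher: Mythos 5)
Your proof is correct, but it follows a genuinely different route from the paper's. The paper proves Theorem~\ref{thm-eswo-np-hard} by a reduction from {\prob{RBDS}}: the blue vertices are the only viable investors, each red-player needs at least one investing neighbor to clear the threshold $q=1$, and an apex player~$v^*$ adjacent to all blue-players caps the number of investors at~$\kappa$ through its externality function. You instead reduce from {\prob{$\kappa$-Clique}}, recycling the network of Theorem~\ref{thm-uswo-np-hard} and re-tuning the externalities to the egalitarian threshold. Your converse direction then needs the extra handshake count --- exactly $\kappa(\kappa-1)/2$ investing edges whose endpoints each carry $0$ or $\kappa-1$ of them force exactly $\kappa$ saturated vertices spanning $\kappa(\kappa-1)/2$ edges, hence a clique --- and it is worth noting that this pairing of a global cardinality constraint (via~$v^*$) with a local degree constraint (via the vertex-players) is exactly what avoids the regular-subgraph pitfall the paper identifies in the flawed clique reduction of Yu~et~al. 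Both constructions deliver the same structural guarantees (bipartite, diameter at most~$4$, uniform cost~$1$), and both exhibit the $1$-versus-$0$ gap on \noins{}s needed for the inapproximability corollary; the paper's {\prob{RBDS}} gadget is lighter and its correctness argument shorter, while yours buys uniformity with the {\prob{USWC}} proof. Two small points you should make explicit: assume $\kappa\geq 3$ (smaller~$\kappa$ is trivial and the handshake step divides by $\kappa-1$), and the diameter bound relies on the paper's standing assumption that the {\prob{$\kappa$-Clique}} instance has no isolated vertices, so that every vertex-player is within distance~$2$ of~$v^*$. The ``loophole'' you flag about $g_{v(e)}(1)=q+1$ is indeed harmless, though not only for the bipartiteness reason you give: a non-investing player exceeding the threshold never hurts an egalitarian lower bound, so no constraint could be violated there in the first place.
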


{
\begin{proof}
Our proof is based on a reduction from the {\prob{RBDS}} problem. Let $(G, \kappa)$ be an {\prob{RBDS}} instance, where~$G$ is a bipartite graph with the vertex partition $(R\uplus B)$. We construct an instance of the decision version of {\prob{ESWC}}, which takes as input a BNPG and a number~$q$, and determines if the give BNPG admits a profile of ESW at least~$q$.
For each vertex $v\in B\uplus R$, we create one player denoted still by the same symbol~$v$ for simplicity.
In addition, we create a player~$v^*$. The network of the players is obtained from~$G$ by first adding~$v^*$ and then creating edges between~$v^*$ and all blue-players in~$B$, which is clearly a bipartite graph with the vertex partition $(R\cup \{v^*\}, B)$. Furthermore, the diameter of the network is at most~$4$.
The externality and the cost functions are defined as follows.
\begin{itemize}
\item For every red-player~$v\in R$, we define $g_{v}(0)=0$ and $g_{v}(x)=1$ for all other possible integers~$x$.
\item For every blue-player~$v\in B$, we define $g_{v}(0)=1$, $g_{v}(1)=2$, and $g_{v}(x)=0$ for all other possible integers $x\geq 2$.
\item For the player~$v^*$, we define $g_{v^*}(x)=1$ for every nonnegative integer $x\leq \kappa$ and $g_{v^*}(x)=0$ for all other possible integers $x> \kappa$.
\item All players have the same investment cost~$1$, i.e., $c(v)=1$ for every player~$v$ constructed above.
\end{itemize}
The reduction is completed by setting $q=1$.
The above instance can be constructed in polynomial time.
It remains to show the correctness of the reduction.

$(\Rightarrow)$ Assume that there is a subset $B'\subseteq B$  such that $\abs{B'}\leq \kappa$ and every red-player has at least one neighbor in~$B'$. One can check that profile~$B'$ has ESW at least one. Particularly, as~$B'$ is the set of the investing neighbors of~$v^*$, due to the definitions of the externality and cost functions given above, the utility of the player~$v^*$ is $g_{v^*}(\abs{B'})=1$. Let~$v$ be a player other than~$v^*$. If $v\in R$, then as~$v$ has at least one neighbor in~$B'$, the utility of~$v$ is exactly one. If $v\in B'$, then the utility of~$v$ is $g_v(2)-c(v)=1$. Finally, if $v\in B\setminus B'$, the utility of~$v$ is $g_v(0)=1$.

$(\Leftarrow)$ Assume that there is a profile~${\bf{s}}$ where every player obtains utility at least one. Observe first that none of $R\cup \{v^*\}$ can be contained in~${\bf{s}}$, since for every player in $R\cup \{v^*\}$, the investment cost is exactly one and the externality benefit is at most one. It follows that ${\bf{s}}\subseteq B$. Then, it must hold that $\abs{\bf{s}}\leq \kappa$, since otherwise the utility of the player~$v^*$ can be at most $g_{v^*}(\abs{\bf{s}})=0$. Finally, as every player $v\in R$ obtains utility at least one under~$\bf{s}$, at least one of~$v$'s neighbors must be contained in ${\bf{s}}$. This implies that~$\bf{s}$ dominates~$R$.
\end{proof}
}

A close look at the reduction in the proof of Theorem~\ref{thm-eswo-np-hard} reveals that if the {\prob{RBDS}} instance is a {\noins}, the best achievable ESW in the constructed BNPG is zero. The following corollary follows.

\begin{corollary}
{\prob{ESWC}} is not polynomial-time approximable within factor~$\beta(p)$ unless $\poly= \np$, where~$p$ is the input size and~$\beta$ can be any computable function in~$p$. Moreover, this holds even when the given network is bipartite with diameter at most~$4$ and all players have the same investment cost.
\end{corollary}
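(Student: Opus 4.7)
The plan is to exploit a value gap hidden in the reduction of Theorem~\ref{thm-eswo-np-hard}: a \yesins\ of \prob{RBDS} produces a BNPG whose maximum ESW is at least~$1$, whereas a \noins\ produces a BNPG whose maximum ESW is exactly~$0$. Since all externality values and investment costs in that construction are nonnegative integers, every profile has integer-valued ESW, so any profile failing to reach~$1$ automatically has ESW at most~$0$. Together with the fact that the empty profile gives $g_v(0)\geq 0$ at every player, this pins the \noins\ optimum to exactly~$0$.

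First I would formally verify the gap. The forward direction of the proof of Theorem~\ref{thm-eswo-np-hard} already exhibits, for any \yesins, a profile of ESW exactly~$1$. For the converse, its backward direction shows that any profile with ESW $\geq 1$ induces a subset $B'\subseteq B$ with $\abs{B'}\leq \kappa$ that dominates~$R$; when the \prob{RBDS} input is a \noins\ no such subset exists, so integrality forces every profile to have ESW $\leq 0$, and the empty profile witnesses that the upper bound~$0$ is attained.

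Next I would set up the inapproximability by contradiction. Suppose some polynomial-time algorithm~$A$ approximates \prob{ESWC} within a factor $\beta(p)$ for a computable~$\beta$. Given any \prob{RBDS} instance, we build the associated BNPG in polynomial time and invoke~$A$. If the instance is a \yesins, then $\mathrm{OPT}\geq 1$, so the approximation guarantee forces the returned profile to have ESW at least $1/\beta(p)>0$. If it is a \noins, \emph{every} feasible profile has ESW at most~$0$, so the returned profile also has ESW $\leq 0$. Computing the ESW of the returned profile and testing its sign therefore decides \prob{RBDS} in polynomial time, contradicting the \nphns\ of \prob{RBDS} unless $\poly=\np$. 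The bipartite structure, diameter bound, and uniform-cost condition are inherited directly from the construction of Theorem~\ref{thm-eswo-np-hard}. The main subtlety is the normally vacuous $\mathrm{OPT}=0$ regime; it is resolved precisely by the integrality argument, which upgrades the zero optimum in the \noins\ case to a uniform upper bound of~$0$ across all feasible profiles, turning the multiplicative gap into a clean sign test.
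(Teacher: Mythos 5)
Your proposal is correct and matches the paper's (very terse) argument: the paper derives the corollary from exactly the same gap in the reduction of Theorem~\ref{thm-eswo-np-hard}, namely that a {\yesins} of {\prob{RBDS}} yields optimum ESW at least~$1$ while a {\noins} yields optimum ESW exactly~$0$, so no multiplicative guarantee can distinguish the two without deciding {\prob{RBDS}}. Your integrality observation and the explicit handling of the vacuous $\mathrm{OPT}=0$ regime are details the paper leaves implicit, but the route is the same.
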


\section{Games with Critical Clique Forests}
In the previous section, we showed that all problems studied in the paper are {\nph}. This motivates us to study the problems when the give network is subject to some restrictions. Yu~et.~al.~\shortcite{DBLP:conf/aaai/YuZBV20} considered the cases where the given networks are cliques or trees, and showed  separately that {\prob{PSNEC}} in both cases becomes polynomial-time solvable. To significantly extend their results, we derive a polynomial-time algorithm which applies to a much larger class of networks containing both cliques and trees. Generally speaking, we consider the networks whose vertices can be divided into disjoint cliques and, moreover, contracting these cliques results in a forest. For formal expositions, we need the following notions.

A {\it{critical clique}} in a graph $G=(V, E)$ is a clique $K\subseteq V$ whose members share exactly the same neighbors and is maximal under this property, i.e., for every $v\in V\setminus K$ either~$v$ is adjacent to all vertices in the clique~$K$ or is adjacent to none of them, and there does not exist any other clique~$K'$ satisfying the same condition and $K\subset K'$. The concept of critical cliques was coined by Lin, Jiang, and Kearney~\shortcite{DBLP:conf/isaac/LinJK00}, and since then has been used to derive many efficient algorithms (see, e.g.,~\cite{DBLP:journals/tcs/Guo09,DBLP:journals/algorithmica/DomGHN06,DBLP:journals/dam/DomGHN08}). It is well-known any two different critical cliques do not intersect. In addition, for two critical cliques, either they are completely adjacent (i.e., there is an edge between every two vertices from the two cliques respectively), or they are not adjacent at all (i.e., there are no edges between these two cliques). For brevity, when we say two critical cliques are adjacent we mean that they are completely adjacent.
For a graph~$G$, its critical clique graph, denoted by~$\cclique{G}$, is the graph whose vertices are critical cliques of~$G$ and, moreover, there is an edge between two vertices if and only if the corresponding critical cliques are adjacent in~$G$. See Figure~\ref{fig-critical-clique-graphs} for an illustration. Every graph has a unique critical clique graph and, importantly, it can be constructed in polynomial time~\cite{DBLP:conf/isaac/LinJK00,DBLP:journals/algorithmica/DomGHN06}.

\begin{figure}
\centering
{\includegraphics[width=0.3\textwidth]{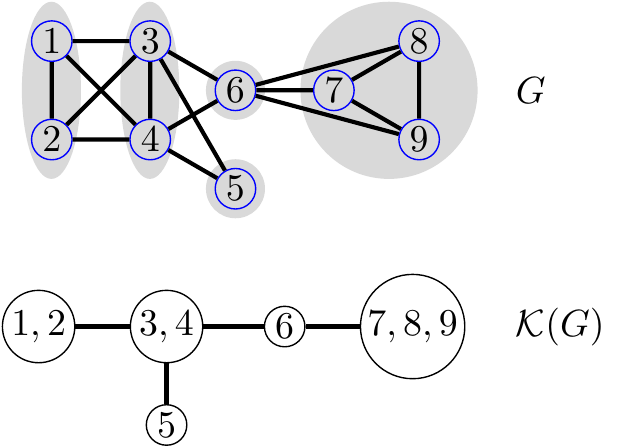}}
\caption{A graph~$G$ and its critical clique graph $\cclique{G}$.}
\label{fig-critical-clique-graphs}
\end{figure}


We are ready to show the first main result in this section.

\begin{theorem}
\label{thm-psnec-poly-critical-clique-graphs}
{\prob{PSNEC}} is polynomial-time solvable when the critical clique graph of the given network is a forest.
\end{theorem}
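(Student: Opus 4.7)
The plan is to use dynamic programming on the critical clique forest, exploiting the fact that all vertices in a single critical clique share the same closed neighborhood. Since connected components of $\mathcal{K}(G)$ correspond to connected components of~$G$, we may process each tree independently and assume $\mathcal{K}(G)$ is a single tree; root it at an arbitrary critical clique~$K_r$.

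The central observation is that for any profile~$\mathbf{s}$ and any critical clique~$K$, the quantity $T := n_G[v,\mathbf{s}]$ is independent of the choice of $v \in K$ and equals $a_K + \sum_{K' \in \mathrm{adj}(K)} a_{K'}$, where $a_J$ denotes the number of investors of clique~$J$ in~$\mathbf{s}$. Given a hypothetical value of~$T$ and a target count $a_K$, whether $K$ can contribute exactly $a_K$ investors in a stable way is purely local: for each $v \in K$ compute the marginals $g_v(T)-g_v(T-1)$ and $g_v(T+1)-g_v(T)$; then $v$ can serve as an investor iff $g_v(T)-g_v(T-1) \geq c(v)$, and as a non-investor iff $g_v(T+1)-g_v(T) \leq c(v)$. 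A stable assignment with exactly $a_K$ investors exists iff every $v \in K$ is feasible in at least one role and the number of ``must-invest'' vertices is at most $a_K$ which is at most the number of ``can-invest'' vertices, checkable in $O(|K|)$ time.

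With this local test in hand, define $D[K][a_K][a_P]$ to be true iff the subtree of $\mathcal{K}(G)$ rooted at~$K$ admits a stable partial profile in which~$K$ contributes $a_K$ investors, under the assumption that the parent clique of~$K$ contributes $a_P$ investors. To set $D[K][a_K][a_P]$, we need values $(a_{K'_1},\dots,a_{K'_\ell})$ for the children $K'_1, \dots, K'_\ell$ of~$K$ such that each $D[K'_i][a_{K'_i}][a_K]$ holds and $T := a_K + a_P + \sum_i a_{K'_i}$ passes the local stability test for~$K$ with investor count $a_K$. Because the test depends only on the sum $s := \sum_i a_{K'_i}$, we first use a knapsack-style DP over the children to record the set $S[K][a_K]$ of achievable values of~$s$, then scan $s \in S[K][a_K]$ and check the local test for $T = a_K + a_P + s$. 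For the root we set $a_P = 0$; the instance admits a PSNE iff $D[K_r][a][0]$ holds for some~$a$, and the profile itself is recovered by standard backtracking through the tables.

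All tables have polynomially many entries (at most $O(n^2)$ per critical clique for~$D$, and $O(n)$ for each knapsack slice), each is filled in polynomial time, and there are at most~$n$ critical cliques, giving an overall polynomial-time algorithm. The main conceptual obstacle is that the naive state $(K, a_K)$ is insufficient, because the stability of a vertex in~$K$ depends on the aggregate investor count over \emph{all} cliques adjacent to~$K$ in $\mathcal{K}(G)$, not just its parent; this is why we must both carry $a_P$ in the state \emph{and} introduce a knapsack layer to recover the child-sum that determines~$T$.
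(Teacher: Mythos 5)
Your proposal is correct and follows essentially the same route as the paper's proof: a bottom-up dynamic program over the rooted critical clique tree whose states record investor counts in a clique and its parent, a purely local marginal test classifying each vertex as must-invest, cannot-invest, or free, and an inner knapsack over the children to aggregate their investor counts. The only cosmetic difference is that the paper carries the child-sum $z$ as an explicit third index of the table, whereas you recover the set of achievable child-sums on the fly; the two formulations are equivalent.
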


\begin{proof}
To prove the theorem, we derive a dynamic programming algorithm to solve {\prob{PSNEC}} in the case where the given network has a critical clique graph being a forest.
Let $\mathcal{G}=(V, E, g_V, c)$  be a BNPG, where $G=(V, E)$ is a network of players.
We first create the critical clique graph of~$G$ in polynomial time. Let~$T$ denote the critical clique graph of~$G$. For clarity, we call vertices in~$T$ nodes. For notational simplicity, we directly use the critical clique to denote its corresponding node in~$T$.
If~$G$ is disconnected, we run the following algorithm for each connected component. Then, we return the union of the profiles computed for all connected components if all subgames restricted to these connected components admit PSNEs; otherwise, we return ``{\no}''.
Therefore, let us assume now that~$G$ is connected, and hence~$T$ is a tree.
We choose any arbitrary node in~$T$ and make it as the root of the tree.
For each nonroot node~$K$ in~$T$, let~$K^{\text{P}}$ denote the parent node of~$K$ in~$T$. If~$K$ is the root, we define $K^{\text{P}}=\emptyset$. In addition, let~$\textsf{chd}(K)$ be the set of the children of~$K$ in~$T$ for every nonleaf node~$K$. If~$K$ is a leaf, we define ${\textsf{chd}}(K)=\{\emptyset\}$.
We use~$T_K$ to denote the subtree of~$T$ rooted at~$K$, and use~$\textsf{Ver}(T_K)$ to denote the set of vertices in~$G$ that are contained in the nodes of~$T_K$, i.e.,
\[\textsf{Ver}(T_K)=\bigcup_{K~\text{is a node in}~T_K} K.\]
For each node~$K$ in~$T$, we maintain a binary dynamic table ${\textsf{DT}}_K(x, y, z)$, where~$x$,~$y$, and~$z$ are integers such that
\begin{itemize}
\item $0\leq x\leq \abs{K}$,
\item $0\leq y\leq \abs{K^\text{P}}$, and
\item $0\leq z\leq \abs{\bigcup_{K'\in \textsf{chd}(K)}K'}$.
\end{itemize}
Particularly, $\textsf{DT}_K(x, y, z)$ is supposed to be~$1$ if and only if the subgame~$\mathcal{G}|_{{\textsf{Ver}}(T_{K^{\text{P}}})}$ admits a profile~${\bf{s}}$ under which (regard $T_{K^{\text{P}}}$ as~$T$ if~$K$ is the root of~$T$)
\begin{itemize}
\item everyone in~$K$ has exactly~$x$ closed neighbors in~$K$ who invest, i.e., $\abs{{\bf{s}}\cap K}=x$,
\item everyone in~$K$ has exactly~$y$ neighbors in~$K^{\text{P}}$ who invest, i.e., $\abs{{\bf{s}}\cap K^{\text{P}}}=y$,
\item everyone in $K$ has exactly~$z$ neighbors in their children nodes who invest, i.e., $\sum_{K'\in {\textsf{chd}}(K)}\abs{{\bf{s}}\cap K'}=z$ and,
\item none of players in ${\textsf{Ver}}(T_K)$ has an incentive to deviate.
\end{itemize}
Clearly, after computing all entries of the table~${\textsf{DT}}_{\hat{K}}$ associated to the root node~$\hat{K}$ in~$T$, we could conclude that the given BNPG admits a PSNE if and only if there if there is a $1$-valued entry ${\textsf{DT}}_{\hat{K}}(x, 0, z)=1$.

The tables associated to the nodes in~$T$ are computed in a bottom-up manner, from those associated to leaf nodes up to that associated to the root node.
Let ${\textsf{DT}}_K(x, y, z)$ be an entry considered at the moment. For each player~$v$, we define
\[\triangle(v, 0)=g_v(x+y+z)-c(v)-g_v(x+y+z-1),\]
\[\triangle(v, 1)=g_v(x+y+z)-(g_v(x+y+z+1)-c(v)).\]
Note that if $\triangle(v, 0)<0$,~$v$ does not invest in any PSNE, and if $\triangle(v, 1)<0$,~$v$ must invest in every PSNE.
Therefore, if there is a player~$v\in K$ such that
$\triangle(v, 0)<0$ and $\triangle(v, 1)<0$, we immediately set $\textsf{DT}_K(x, y, z)=0$. Otherwise, we divide the players from~$K$ into
\begin{itemize}
\item $K_{{-}}=\{v\in K \setmid \triangle(v, 0)<0\}$;
\item $K_{{+}}=\{v\in K \setmid \triangle(v, 1)<0\}$; and
\item $K_*=K\setminus (K_{{+}}\cup K_{-})$.
\end{itemize}
We have the following observations.
\begin{itemize}
\item None of~$K_{{-}}$ invests in any PSNE;
\item All players in~$K_{{+}}$ must invest in all PSNEs;
\item Each player in~$K_*$ can be in both the set of investing players and the set of noninvesting players under all PSNEs.
\end{itemize}

Given the above observations, if $\abs{K_{{+}}}> x$ or $\abs{K_{{+}}\cup K_*}< x$, we directly set ${\textsf{DT}}_K(x, y, z)=0$.
Let us assume now that $\abs{K_{{+}}}\leq x$ and $\abs{K_{{+}}\cup K_*}\geq x$. We determine the value of ${\textsf{DT}}_K(x, y, z)$ as follows.
If~$K$ is a leaf node, we set ${\textsf{DT}}_K(x, y, z)=1$ if and only if $z=0$.
Otherwise, let~($K_1, K_2, \dots, K_t)$ be an arbitrary but fixed order of the children of~$K$ in~$T$, where~$t$ is the number of children of~$K$ in~$T$. Then, we set ${\textsf{DT}}_K(x, y, z)=1$ if and only if the following condition holds: there are entries $\textsf{DT}_{K_1}(x_1, y_1, z_1)=1$, $\textsf{DT}_{K_2}(x_2, y_2, z_2)=1$, $\dots$, $\textsf{DT}_{K_t}(x_t, y_t, z_t)=1$ such that $y_j=x$ for all $j\in [t]$ and $\sum_{j=1}^t x_t=z$.
In fact, in this case we let all players in~$K_{{+}}$ and arbitrarily $x-\abs{K_{{+}}}$ players in~$K_*$  invest; and let all the other players in~$K$ not invest. Importantly, the above condition can be checked in polynomial time by a dynamic programming algorithm. To this end, we maintain a binary dynamic table ${\textsf{DT}'}(i, x_i(1), x_i(2))$ where $i\in [t]$, and~$x_i(1)$ and~$x_i(2)$ are two integers such that $0\leq x_i(1)\leq \abs{K_i}$ and $0\leq x_i(2)\leq \abs{\bigcup_{j\in [i-1]}K_j}$ if $i>1$ and $x_i(2)=0$ if $i=1$.  In particular, ${\textsf{DT}'}(i, x_i(1), x_i(2))$ is supposed to be~$1$ if and only if there are entries $\textsf{DT}_{K_1}(x_1, y_1, z_1)=1$, $\textsf{DT}_{K_2}(x_2, y_2, z_2)=1$, $\dots$, $\textsf{DT}_{K_i}(x_i, y_i, z_i)=1$ such that $y_j=x$ for all $j\in [i]$, $\sum_{j=1}^{i-1} x_j=x_i(2)$, and $x_i=x_i(1)$.
The table is computed as follows. First, every base entry $\textsf{DT}'(1, x_1(1), x_1(2))$ has value~$1$ if and only if $x_1(2)=0$ and ${\textsf{DT}}_{K_1}(x_1(1), x, z')=1$ for some integer~$z'$. Then, the value of every entry ${\textsf{DT}'}(i, x_1(1), x_1(2))$ such that $i\geq 2$ is~$1$ if and only if there exists an entry ${\textsf{DT}}'(i-1, x_{i-1}(1), x_{i-2}(2))=1$ such that $x_{i-1}(1)\leq x_i(2)$, $x_{i-2}(2)=x_i(2)-x_{i-1}(1)$, and ${\textsf{DT}}_{K_i}(x_i(1), x, z')=1$ for some integer~$z'$. The above condition is satisfied if and only if $\textsf{DT}'(t, x_t(1), x_t(2))=1$ for some valid values of~$x_t(1)$ and~$x_t(2)$ such that $x_t(1)+x_t(2)=z$.

The algorithm can be implemented in polynomial time since for each node~$K$, the corresponding table has at most~$n^3$ entries, where~$n$ is the number of total players. So, we have in total at most~$n^4$ entries each of which can be computed in polynomial time.
\end{proof}

For {\prob{USWC}} and {\prob{ESWC}} we have similar results.

\begin{theorem}[*]
\label{thm-uswo-critical-clique-graph-tree-poly}
{\prob{USWC}} is polynomial-time solvable if the critical clique graph of the given network is a forest.
\end{theorem}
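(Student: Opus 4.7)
The plan is to adapt the dynamic programming algorithm of Theorem~\ref{thm-psnec-poly-critical-clique-graphs}, replacing the Boolean feasibility table by a real-valued one that records the maximum attainable partial USW. As in that proof, we first build the critical clique graph~$T$ of the given network~$G$ in polynomial time; if~$T$ is a forest, we run the algorithm separately on each tree component and sum the results, since a BNPG decomposes into independent subgames over the connected components of its network. We then root each tree at an arbitrary node~${\hat{K}}$ and inherit the notions $K^{\text{P}}$, $\textsf{chd}(K)$, $T_K$, and $\textsf{Ver}(T_K)$.

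For each node~$K$ we maintain a table ${\textsf{DT}}_K(x, y, z)$ over the same index ranges as before, but now ${\textsf{DT}}_K(x, y, z)$ stores the maximum value of $\sum_{v \in \textsf{Ver}(T_K)} \mu(v, \mathbf{s}, \mathcal{G}|_{\textsf{Ver}(T_K)})$ over all profiles~$\mathbf{s}$ of that subgame satisfying $|\mathbf{s} \cap K| = x$ and $\sum_{K' \in \textsf{chd}(K)} |\mathbf{s} \cap K'| = z$, under the hypothesis that exactly~$y$ players of~$K^{\text{P}}$ invest. The structural observation that drives the recurrence is that, since any two critical cliques are either completely adjacent or not adjacent at all, every $v \in K$ satisfies $n_G[v, \mathbf{s}] = x + y + z$ regardless of whether~$v$ itself invests. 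Consequently the local contribution of the players of~$K$ to the objective equals
\[
\sum_{v \in K} g_v(x + y + z) \; - \; \min\biggl\{\sum_{v \in S} c(v) : S \subseteq K,\; |S| = x\biggr\},
\]
and the minimum is attained greedily by picking the~$x$ members of~$K$ with the smallest investment costs, a quantity that can be precomputed per clique in polynomial time.

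With the local term in hand, we combine children exactly as in Theorem~\ref{thm-psnec-poly-critical-clique-graphs}, except that addition-and-maximization replaces Boolean conjunction. For each child~$K_j$ of~$K$ we first set $M_{K_j}(x_j) = \max_{z_j} {\textsf{DT}}_{K_j}(x_j, x, z_j)$, and then run a secondary one-dimensional DP along an arbitrary ordering $(K_1, \dots, K_t)$ of the children to compute $\max\{\sum_{j=1}^{t} M_{K_j}(x_j) : x_j \in \{0, \dots, |K_j|\},\; \sum_j x_j = z\}$. Adding this maximum to the local contribution yields ${\textsf{DT}}_K(x, y, z)$. The optimal USW of the whole game is then $\max_{x, z} {\textsf{DT}}_{\hat{K}}(x, 0, z)$, and a standard trace-back recovers a witnessing profile.

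The main obstacle, compared with the PSNEC proof, is not a stability check---the no-deviation conditions and the $K_{-}, K_{+}, K_*$ partition simply drop out, since a USW-maximizing profile is unconstrained by individual incentives---but rather justifying that the greedy ``cheapest~$x$ members of~$K$'' rule is optimal inside every table cell. This is immediate from the observation above: because $\sum_{v \in K} g_v(x + y + z)$ is independent of \emph{which}~$x$ players of~$K$ invest, maximizing the $K$-contribution reduces to minimizing the total investment cost of the chosen investors. The running-time bound matches Theorem~\ref{thm-psnec-poly-critical-clique-graphs}: each node carries at most $O(n^3)$ entries, giving $O(n^4)$ entries in total, and each entry is computed in polynomial time via the secondary DP.
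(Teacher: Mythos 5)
Your proposal is correct and follows essentially the same route as the paper's proof: the same $(x,y,z)$-indexed tables on the rooted critical clique tree, the same observation that every member of a critical clique sees exactly $x+y+z$ investing closed neighbors so that only the count (not the identity) of investors matters, the same greedy choice of the $x$ cheapest investors within each clique, and the same secondary knapsack-style DP over the children to enforce $\sum_j x_j = z$. The only cosmetic slip is writing the table value as a sum of $\mu(v,\mathbf{s},\mathcal{G}|_{\textsf{Ver}(T_K)})$, whereas the utilities of players in~$K$ depend on the $y$ investors in~$K^{\text{P}}$ and so should be taken in the subgame induced by $\textsf{Ver}(T_{K^{\text{P}}})$, as your own use of $g_v(x+y+z)$ already reflects.
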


\begin{proof}
To prove the theorem, we derive a dynamic programming algorithm to solve {\prob{USWC}} in the case where the given network has a critical clique graph being a forest.
Let $\mathcal{G}=(V, E, g_V, c)$  be a BNPG, where $G=(V, E)$ is a network of players.
We first create the critical clique graph of~$G$ in polynomial time. Let~$T$ denote the critical clique graph of~$G$. For clarity, we call vertices in~$T$ nodes. For notational simplicity, we directly use the critical clique to denote its corresponding node in~$T$.
If~$G$ is disconnected, we run the following algorithm for each connected component. Then, we sum up all the values returned by the algorithms running on the connected components.
Therefore, let us assume now that~$G$ is connected, and hence~$T$ is a tree.
We choose any arbitrary node in~$T$ and make it as the root of the tree.
For each nonroot node~$K$ in~$T$, let~$K^{\text{P}}$ denote the parent node of~$K$ in~$T$. If~$K$ is the root, we define $K^{\text{P}}=\emptyset$. In addition, let~$\textsf{chd}(K)$ be the set of the children of~$K$ in~$T$ for every nonleaf node~$K$. If~$K$ is a leaf, we define ${\textsf{chd}}(K)=\{\emptyset\}$.
We use~$T_K$ to denote the subtree of~$T$ rooted at~$K$, and use~$\textsf{Ver}(T_K)$ to denote the set of vertices in~$G$ that are contained in the nodes in the subtree~$T_K$.
For each node~$K$ in~$T$, we maintain a dynamic table ${\textsf{DT}}_K(x, y, z)$, where~$x$,~$y$, and~$z$ are integers such that
\begin{itemize}
\item $0\leq x\leq \abs{K}$,
\item $0\leq y\leq \abs{K^\text{P}}$, and
\item $0\leq z\leq \abs{\bigcup_{K'\in \textsf{chd}(K)}K'}$.
\end{itemize}
We say that a profile of the subgame~$\mathcal{G}|_{\textsf{Ver}(T_{K^{\text{P}}})}$ (regard $T_{K^{\text{P}}}$ as~$T$ if~$K$ is the root of~$T$) is a $\textsf{DT}_K(x, y, z)$-compatible profile of the subgame if in this profile the following three conditions are satisfied:
\begin{enumerate}
\item exactly~$x$ players in~$K$ invest,
\item exactly~$y$ players in~$K^{\text{P}}$ invest, and
\item exactly~$z$ players in $\bigcup_{K'\in \textsf{chd}(K)}K'$ invest.
\end{enumerate}
The value of the entry ${\textsf{DT}}_K(x, y, z)$ is supposed to be the maximum possible USW of players in~${\textsf{Ver}}(T_K)$ under $\textsf{DT}_K(x, y, z)$-compatible profiles of the subgame~$\mathcal{G}|_{\textsf{Ver}(T_{K^{\text{P}}})}$.

The values of the entries in the table can be computed recursively, beginning from the leaf nodes up to the root node.
In particular, if ${\textsf{DT}}_K(x, y, z)$ is a leaf node in~$T$ (note that in this case~$z=0$), we compute ${\textsf{DT}}_K(x, y, 0)$ as follows.
For every player~$v\in K$, the number of closed neighbors who invest is exactly~$x+y$ in every ${\textsf{DT}}_K(x, y, 0)$-compatible profile. Then, the utility of every player~$v\in K$ to investing and not to investing are respectively $g_v(x+y)-c(v)$ and $g_v(x+y)$. We order players in~$K$ according to a nondecreasing order of the investment costs~$c(v)$ of players~$v\in K$. Then, it is easy to see that a ${\textsf{DT}}_K(x, y, 0)$-compatible profile which consists of the first~$x$ players in the order achieves the maximum possible USW of players in~$K$, among all ${\textsf{DT}}_K(x, y, 0)$-compatible profiles of the game restricted to $\textsf{Ver}(T_{K^{\text{P}}})$. Let~$H$ be the set of the first~$x$ players in the order. In light of the above discussion, we define
\begin{align*}
{\textsf{DT}}_K(x, y, 0)& =\sum_{v\in H}(g_v(x+y)-c(v))+\sum_{v\in K\setminus H}g_v(x+y)\\
&= \sum_{v\in K} g_v(x+y)-\sum_{v\in H}c(v).\\
\end{align*}

For a nonleaf node~$K$, we compute ${\textsf{DT}}_K(x, y, z)$ as follows, assuming that the values of all tables associated to the descendants of~$K$ are already computed.
First, similar to the above case, we first order players in~$K$ according to a nondecreasing order of~$c(v)$, $v\in K$, and let~$H$ denote the first~$x$ players in the order. Then, we define
\[s=\sum_{v\in H}\left(g_v(x+y+z)-c(v)\right)+\sum_{v\in K\setminus H}g_v(x+y+z),\]
which is the maximum possible USW of players in~$K$ under ${\textsf{DT}}_K(x, y, z)$-compatible profiles. However, we need also to take into account the USW of the descendents of~$K$. We solve this by a dynamic programming. Let $(K_1, K_2, \dots, K_t)$ be an arbitrary but fixed order of children of~$K$, where~$t$ is the number of children of~$K$. As~$K$ is the parent of each~$K_i$, only the entries ${\textsf{DT}}_{K_i}(x_i, y_i, z_i)$ such that $y_i=x$ are relevant to our computation. For each $i\in [t]$, we maintain a dynamic table ${\textsf{DT}'}(i, x_i(1), x_i(2))$ where $x_i(1)$ and $x_i(2)$ are two integers such that $0\leq x_i(1)\leq \abs{K_i}$ and $0\leq x_i(2)\leq \bigcup_{j\in [i-1]}K_j$ if $i>1$ and $x_i(2)=0$ if $i=1$. The integer $x_i(1)$ and $x_i(2)$ respectively indicate the number of players in~$K_i$ who invest and the number of players in $\bigcup_{j\in [i-1]}K_j$ who invest, and the value of the entry is the maximum possible USW of players in $\bigcup_{j\in [i]}K_j$ and their descendants, i.e., the USW of players in $\bigcup_{j\in [i]} {\textsf{Ver}}(T_{K_j})$, under the above restrictions.
The table is computed as follows. First, we let
\[{\textsf{DT}'}(1, x_1(1), 0)=\max_{z'}{\textsf{DT}}_{K_1}(x_1(1), x, z'),\]
where~$z'$ runs over all possible values.
Then, for each~$i$ from~$2$ to~$t$ (this applies only when $t\geq 2$), the entry ${\textsf{DT}}'(i, x_i(1), x_i(2))$ is computed by the following recursive:
\begin{align*}
\textsf{DT}'(i, x_i(1), x_i(2))=\max_{z'}{\textsf{DT}}_{K_i}(x_i(1), x, z')+\\
\max_{\substack{0\leq x_{i-1}(1)\leq \abs{K_{i-1}} \\ x_i(2)-x_{i-1}(1)\geq 0}}\{\textsf{DT}'(i-1, x_{i-1}(1), x_i(2)-x_{i-1}(1))\},\\
\end{align*}
where~$z'$ runs all possible values.
After all the entries are updated, we define
\[s'=\max_{\substack{0\leq x_t(1)\leq \abs{K_t}\\ x_t(1)+x_t(2)=z}}\{\textsf{DT}'(t, x_t(1), x_t(2))\}.\]
Now we are ready to update the entry for~$K$.
In particular, we define
\[\textsf{DT}_K(x, y, z)=s+s'.\]
After the entries of the table~$\textsf{DT}$ are computed, we return
\[\max_{x', z'}\{\textsf{DT}_K(x', 0, z')\},\]
where~$K$ is the root node and~$x'$ and~$z'$ run over all possible values.

The algorithm can be implemented in polynomial time since for each node~$K$, the corresponding table has at most~$n^3$ entries, where~$n$ is the number of total players. So, we have in total at most~$n^4$ entries each of which can be computed in polynomial time.
\end{proof}

\begin{theorem}[*]
{\prob{ESWC}} is polynomial-time solvable if the critical clique graph of the given network is a forest.
\end{theorem}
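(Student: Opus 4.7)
The plan is to reduce {\prob{ESWC}} to a polynomial collection of feasibility tests and solve each by a dynamic programming scheme over the critical clique tree analogous to the one in Theorem~\ref{thm-psnec-poly-critical-clique-graphs}. Observe first that under any profile~${\bf{s}}$, the utility of every player $v\in V$ equals either $g_v(i)-c(v)$ or $g_v(i)$ for some $i\in\{0,1,\dots,n_G[v]\}$. Hence the set~$\Theta$ of candidate utility values ranging over all players and all such~$i$ has size polynomial in the input, and the optimum ESW lies in~$\Theta$. It therefore suffices to decide, for each $\theta\in\Theta$, whether there exists a profile under which every player obtains utility at least~$\theta$, and then to output a profile realizing the largest feasible~$\theta$.

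The feasibility test at a fixed~$\theta$ is carried out on the critical clique tree~$T$ by the same node-by-node skeleton as in Theorem~\ref{thm-psnec-poly-critical-clique-graphs}: root~$T$ and, for each node~$K$ and each triple $(x,y,z)$ in the index range used there, define $\textsf{DT}_K(x,y,z)=1$ iff the subgame $\mathcal{G}|_{\textsf{Ver}(T_{K^{\text{P}}})}$ admits a profile in which exactly~$x$ players of~$K$,~$y$ players of~$K^{\text{P}}$, and~$z$ players of $\bigcup_{K'\in\textsf{chd}(K)}K'$ invest, and every player of~$\textsf{Ver}(T_K)$ has utility at least~$\theta$. The only change to the local computation at~$K$ concerns the partition of~$K$ into forced-invest, forced-abstain, and free players: each $v\in K$ is allowed to invest iff $g_v(x+y+z)-c(v)\geq\theta$ and allowed to abstain iff $g_v(x+y+z)\geq\theta$; if some $v\in K$ is allowed to do neither we set $\textsf{DT}_K(x,y,z)=0$, and otherwise we check that~$x$ is compatible with the cardinalities of the forced-invest and forced-abstain groups. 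The combination with the children of~$K$ via the auxiliary table~$\textsf{DT}'$ is carried over verbatim from the PSNEC algorithm, under the same constraints $y_j=x$ for every child~$K_j$ and $\sum_j x_j=z$. Feasibility at threshold~$\theta$ then holds iff some entry $\textsf{DT}_{\hat{K}}(x,0,z)=1$ at the root~$\hat{K}$, and a witness profile is recovered by standard backtracking through the tables.

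Running the feasibility test for each $\theta\in\Theta$ and selecting the largest~$\theta$ for which it succeeds yields a polynomial-time algorithm for {\prob{ESWC}}, since by the argument in Theorem~\ref{thm-psnec-poly-critical-clique-graphs} each feasibility test runs in polynomial time and~$|\Theta|$ is polynomial in the input size. The main obstacle is conceptual rather than technical: the ESW objective is a $\min$ rather than a sum and so does not decompose additively along~$T$ in the style of Theorem~\ref{thm-uswo-critical-clique-graph-tree-poly}; the thresholding step above is precisely what resolves this, by converting the optimization into a decomposable feasibility question of exactly the same shape as the one used for {\prob{PSNEC}}.
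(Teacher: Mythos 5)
Your proposal is correct and follows essentially the same route as the paper: the paper likewise guesses the target ESW value~$q$ from a polynomial-size candidate set and, for each guess, runs a binary feasibility dynamic program over the critical clique tree with entries $\textsf{DT}_K(x,y,z)$ indicating whether a profile exists with the prescribed investment counts under which every player of $\textsf{Ver}(T_K)$ attains utility at least~$q$, using the same local admissibility test on~$K$ (your ``allowed to invest/abstain'' conditions match the paper's sets~$A$ and~$B$) and the same auxiliary table~$\textsf{DT}'$ for combining children. No gaps.
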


{
\begin{proof}
The algorithm is similar to the one in the proof of Theorem~\ref{thm-uswo-critical-clique-graph-tree-poly}. In particular, we guess the ESW of the desired profile. There can be polynomially many guesses. For each guessed value~$q$, we determine if there is a profile of ESW at least~$q$, i.e.,  every player receives utility at least~$q$ under this profile. This can be solved using a dynamic programming algorithm.
We adopt the same notations in the proof of Theorem~\ref{thm-uswo-critical-clique-graph-tree-poly}.
However, in the current algorithm, each entry $\textsf{DT}_K(x, y, z)$ in the dynamic tables takes only binary values.
Precisely, $\textsf{DT}_K(x, y, z)$ is supposed to be~$1$  if and only if the subgame~$\mathcal{G}|_{{\textsf{Ver}}(T_K^{\text{P}})}$ admits a profile~${\bf{s}}$ under which (regard $T_{K^{\text{P}}}$ as~$T$ if~$K$ is the root of~$T$)
\begin{enumerate}
\item exactly~$x$ players in~$K$ invest, i.e., $\abs{{\bf{s}}\cap K}=x$,
\item exactly~$y$ players in~$K^{\text{P}}$ invest, i.e., $\abs{{\bf{s}}\cap K^{\text{P}}}=y$,
\item exactly~$z$ players in $\bigcup_{K'\in \textsf{chd}(K)}K'$ invest, i.e., $\sum_{K'\in \textsf{chd}(K)}\abs{{\bf{s}}\cap K'}=z$ and, more importantly,
\item every player $v\in {\textsf{Ver}}(T_K)$ obtains utility at least~$q$ under this profile, i.e., $\mu(v, {\bf{s}}, \mathcal{G}|_{{\textsf{Ver}}(T_{K^{\text{P}}})})\geq q$.
\end{enumerate}
The values of entries in the tables associated to leaf nodes can be computed trivially based on the above definition. We describe how to update the remaining tables.
Let ${\textsf{DT}}_K(x, y, z)$ be the currently considered entry in a table associated to a node~$K$ in~$T$. Let $K_1, K_2, \dots, K_t$ be the children of~$K$ in~$T$. We set ${\textsf{DT}}_K(x, y, z)$ to be~$1$ if and only if the following conditions hold simultaneously.
\begin{enumerate}
\item There is a subset $K'\subseteq K$ of cardinality~$x$ such that
\begin{itemize}
\item for every $v\in K'$ it holds that $g_v(x+y+z)-c(v)\geq q$; and
\item for every $v\in K\setminus K'$ it holds that $g(x+y+z)\geq q$.
\end{itemize}
\item There are ${\textsf{DT}}_{K_1}(x_1, x, z_1)=1$, ${\textsf{DT}}_{K_2}(x_2, x, z_2)=1$, $\dots$, ${\textsf{DT}}_{K_t}(x_t, x, z_t)=1$ such that $\sum_{j=1}^t x_j=z$.
\end{enumerate}
We point out that both of the above two conditions can be checked in polynomial time.
Precisely, to check the first condition, we define $A=\{v\in K \setmid g_v(x+y+z)< q\}$ and $B=\{v\in K \setmid g_v(x+y+z)\geq q, g_v(x+y+z)-c(v)< q\}$, both of which can be computed in polynomial time. Clearly, if $A\neq \emptyset$, Condition~1 does not hold. Otherwise, if $\abs{B}> \abs{K}-x$, we also conclude that Condition~1 does not hold, because due to the definition of~$B$, none of them should invest in order to obtain utility at least~$q$. If none of the above two cases occurs, we conclude that Condition~1 holds. As a matter of fact, in this case, we can let~$K'$ be any subset of $K\setminus B$ of cardinality~$x$.
To check Condition~2, we use a similar dynamic programming algorithm with the associated table ${\textsf{DT}'}(i, x_i(1), x_i(2))$ in the proof of Theorem~\ref{thm-uswo-critical-clique-graph-tree-poly}.

The algorithm runs in polynomial time since there are polynomially many entries and computing the value for each entry can be done in polynomial time.
\end{proof}
}

\section{Networks with Bounded Treewidth}
In this section, we study another prevalent class of tree-like networks, namely, the networks with a constant bounded treewidth. We show that the problems studied in the paper are polynomial-time solvable in this special case.
Notice that as every clique of size~$k$ has treewidth~$k-1$, the results established in the previous section do not cover the polynomial-time solvability in this case. The other direction does not hold too because every cycle has treewidth three but the critical clique graph of every cycle is itself.
%

The following notion is due to~\cite{DBLP:journals/jal/RobertsonS86}.

A {\it{tree decomposition}} of a graph $G=(V, E)$ is a tuple $(T=(L, F),\mathcal{B})$, where~$T$ is a rooted tree with vertex set~$L$ and edge set~$F$,
and $\mathcal{B}=\{B_x \subseteq V \mid x\in L\}$ is a
collection of subsets of vertices of~$G$ such that the following three conditions are satisfied:
\begin{itemize}
\item every $v\in V$ is contained in at least one element of $\mathcal{B}$; 
\item for each edge $\edge{v}{v'}\in E$, there exists at least one $B_x\in \mathcal{B}$ such that $v, v'\in B_x$; and 
\item for every $v\in V$, if~$v$ is in two distinct $B_x, B_y\in \mathcal{B}$, then~$v$ is in every $B_z\in \mathcal{B}$ where~$z$ is on the unique path between~$x$ and~$y$ in~$T$.
\end{itemize}
The {\it{width}} of the tree decomposition is $\max_{B\in \mathcal{B}}{|B|-1}$.
The {\it{treewidth}} of a graph~$G$, denoted by~$\omega(G)$, is the width of a tree decomposition of~$G$ with the minimum width.
The subsets in~$\mathcal{B}$ are often called {\it{bags}}. The root bag in the decomposition is the bag associated to the root of~$T$.
To avoid confusion, in the following we refer to the vertices of~$T$ as nodes.
The parent bag of a bag $B_i\in \mathcal{B}$ means the bag associated to the parent of~$i$.

A more refined notion is the so-called nice tree decomposition.
Particular, a {\it{nice tree decomposition}} $(T, \mathcal{B})$ of a graph~$G$ is a specific tree decomposition of~$G$ satisfying the following conditions:
\begin{itemize}
\item every bag associated to the root or a leaf of~$T$ is empty;
\item inner nodes of~$T$ are categorized into {\it{introduce nodes, forget nodes}}, and {\it{join nodes}} such that
\begin{itemize}
\item each introduce node~$x$ has exactly one child~$y$ such that $B_y\subset B_x$ and $|B_x\setminus B_y|=1$; 
\item each forget node~$x$ has exactly one child~$y$ such that $B_x\subset B_y$ and $|B_y\setminus B_x|=1$; and
\item each join node~$x$ has exactly two children~$y$ and~$z$ such that $B_x=B_y=B_z$.
\end{itemize}
\end{itemize}

For ease of exposition, we sometimes call a bag associated to a join (resp.\ forget, introduce) node a join (resp.\ forget, introduce) bag.
It can be known from the definition that in a nice tree decomposition of a graph~$G$, every vertex in~$G$ can be introduced several times but can be only forgotten once.
Nice tree decomposition was introduced by Bodlaender and Kloks~\shortcite{DBLP:conf/icalp/BodlaenderK91}, and has been used in tacking many problems.
 At first glance, nice tree decomposition seems very restrictive.
 However, it is proved that given a tree decomposition of width~$\omega$, one can calculate a nice tree decomposition with the same width in polynomial-time~\cite{DBLP:books/sp/Kloks94}.


It is known that calculating the treewidth of a graph~$G$ is {\nph}~\cite{DBLP:journals/actaC/Bodlaender93}.
However, determining whether a graph has a constant bounded treewidth can be solved in polynomial time and, moreover, powerful heuristic algorithms, approximation algorithms, and fixed-parameter algorithms to calculating treewidth have been reported~\cite{DBLP:conf/birthday/Bodlaender12,DBLP:journals/siamcomp/BodlaenderDDFLP16,DBLP:conf/iwpec/ZandenB17}.
Hence, in the following results we assume that a nice tree-decomposition of the given network is given.

\begin{theorem}[*]
{\prob{PSNEC}} is polynomial-time solvable if the treewidth of the given network is a constant.
\end{theorem}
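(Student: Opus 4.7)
The plan is a bottom-up dynamic programming over a given nice tree decomposition $(T=(L,F),\mathcal{B})$ of the network~$G$, whose width~$\omega$ is a constant. For a node~$x$ of~$T$ let~$V_x$ denote the set of players lying in some bag of the subtree of~$T$ rooted at~$x$; these are the players ``processed'' by the time we reach~$x$. For every node~$x$ I maintain a Boolean table $\textsf{DT}_x$ indexed by states $\sigma:B_x\rightarrow \{0,1\}\times\{0,1,\dots,n\}$; writing $\sigma(v)=(a_v,k_v)$, the entry $\textsf{DT}_x(\sigma)=1$ is meant to hold iff there is a profile ${\bf{s}}\subseteq V_x$ such that, for each $v\in B_x$, we have $v\in{\bf{s}}$ iff $a_v=1$ and $k_v=\abs{N_G(v)\cap V_x\cap {\bf{s}}}$, and such that every player in $V_x\setminus B_x$ (already forgotten) has no incentive to deviate from its action in~${\bf{s}}$.

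The tables are filled bottom-up. At a leaf node $B_x=\emptyset$ the unique entry is~$1$. At an \emph{introduce} node~$x$ adding a player~$v$ to the child bag~$B_y$, one first observes via the connectivity axiom of tree decompositions that every neighbour of~$v$ lying in~$V_y$ must in fact already lie in~$B_y$, since otherwise the edge between~$v$ and that neighbour could never share a bag; hence $\textsf{DT}_x(\sigma)=1$ iff the state~$\sigma'$ obtained from~$\sigma$ by removing~$v$ and by decreasing $k_u$ by~$a_v$ for every $u\in N_G(v)\cap B_y$ satisfies $\textsf{DT}_y(\sigma')=1$, and additionally $k_v=\sum_{u\in N_G(v)\cap B_y}a_u$. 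At a \emph{forget} node~$x$ that drops a player~$v$ from~$B_y$, the stored count~$k_v$ is, by the same connectivity argument, already the final value $\abs{N_G(v)\cap {\bf{s}}}$, so the PSNE condition for~$v$ can be verified right here: $\textsf{DT}_x(\sigma)=1$ iff some $(a_v,k_v)$ satisfies $\textsf{DT}_y(\sigma\cup\{v\mapsto (a_v,k_v)\})=1$ together with
\[g_v(k_v+a_v)-a_v\cdot c(v)\;\geq\;g_v(k_v+1-a_v)-(1-a_v)\cdot c(v).\]
At a \emph{join} node~$x$ with children $y,z$ (so $B_x=B_y=B_z$), I combine pairs of states $(\sigma^y,\sigma^z)$ that agree on every~$a_v$ in~$B_x$, taking the joined counts as
\[k_v=k_v^y+k_v^z-\sum_{u\in N_G(v)\cap B_x}a_u\]
in order to cancel the double contribution of neighbours of~$v$ that sit in the shared bag~$B_x$ and are therefore counted in both child subtrees.

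The given BNPG admits a PSNE iff the table at the root of~$T$ (whose bag is empty) contains a $1$-entry, and the corresponding profile is recovered by the usual back-pointer bookkeeping. Since~$\omega$ is a constant, each bag carries at most $(2(n+1))^{\omega+1}=n^{\bigo(1)}$ states and each entry is computed from polynomially many lookups in the child tables, so the whole procedure runs in polynomial time. The main obstacle I expect is the careful bookkeeping of the counts~$k_v$: one must simultaneously justify that the count stored when~$v$ is forgotten is truly its ultimate number of investing neighbours in~$G$ (so that the deviation check is valid), and avoid the over-counting of shared-bag contributions at join nodes. Once these invariants are pinned down, correctness follows by a routine bottom-up induction on the height of~$T$.
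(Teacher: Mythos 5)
Your proposal is correct and follows essentially the same route as the paper: a bottom-up dynamic program over a nice tree decomposition whose states record, for each bag vertex, its action together with a count of investing neighbours among the already-processed players, with the no-deviation check deferred to the moment a vertex is forgotten (justified, as you note, by the connectivity property of tree decompositions). The only divergences are bookkeeping conventions --- your counts include investing neighbours inside the bag itself (hence your explicit correction for double-counting at join nodes), whereas the paper counts only neighbours strictly outside the bag and performs the deviation test at the child of each forget node rather than at the forget node itself.
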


{
\begin{proof}
Let $\mathcal{G}=(V, E, g_V, c)$ be a BNPG, where~$(V, E)$ is a network of players,~$g_V$ is a set of externality functions of players in~$V$, one for each player, and $c: V\rightarrow \mathbb{R}_{\geq 0}$ is the investment cost function. For every player $v\in V$, let $g_v: \mathbb{N}_{0}\rightarrow \mathbb{R}_{\geq 0}$ denote its externality function in~$g_V$. Let~$G$ denote the network $(V, E)$, and  let $n=\abs{V}$ be the number of players. In addition, let $(T, \mathcal{B})$ be a nice-tree decomposition of~$G$ which is of polynomial size in~$n$ and of width at most~$p$ for some constant~$p$. For a node~$i$ in~$T$, let $B_i\in \mathcal{B}$ denote its associated bag in the nice tree decomposition. Moreover, let~$T_i$ denote the subtree of~$T$ rooted at~$i$, let~$G_i=G[\bigcup_{j\in \ver{T_i}} B_j]$ denote the subgraph of~$G$ induced by all vertices contained in bags associated to nodes in~$T_i$, and let~$V_i$ denote the vertex set of~$G_i$. For each nonroot bag $B_i\in \mathcal{B}$, we define~$B_i^{\text{P}}$ as the parent bag of~$B_i$. If~$B_i$ is the root bag, we define $B_i^{\text{P}}=\emptyset$.
For each bag~$B_i$ associated to a node~$i$ and each vertex~$v\in B_i$, we use~$n_i(v)$ to denote the number of neighbors of~$v$ in the subgraph~$G_i-B_i$.
In the following, we derive a dynamic programming algorithm to determine if the BNPG game $\mathcal{G}$ has a PSNE profile, and if so, the algorithm returns a PSNE profile.

For each bag~$B_i\in \mathcal{B}$, we maintain a binary dynamic table ${\textsf{DT}}_i(U, f)$ where~$U$ runs over all subsets of~$B_i$ and~$f$ runs over all functions $f: B_i\rightarrow \mathbb{N}_0$ such that~$f(u)\leq n_i(u)$ for every~$u\in B_i$.
The entry~${\textsf{DT}}_i(U, f)$ is supposed to be~$1$ if the subgame $\mathcal{G}|_{V_i}$ admits a profile~${\bf{s}}$ such that
\begin{enumerate}
\item ${\bf{s}}\cap B_i=U$;
\item no player in $V(G_i)\setminus B_{t'}$ has an incentive to deviate under~${\bf{s}}$; and
\item every player $v\in B_i$ has exactly~$f(v)$ investing neighbors contained in $G_i-B_i$ under~${\bf{s}}$.
\end{enumerate}

We compute the tables for the bags in a bottom-up manner, beginning from those associated to leaf nodes to the table associated to the root. Assume that ${\textsf{DT}}_i(U, f)$ is the currently considered entry. To compute the value of this entry, we distinguish the following cases.

\begin{description}
\item[Case~1: $B_i^{\text{P}}$ is a join or an introduce bag, or~$i$ is the root of~$T$] \hfill

We further distinguish between the following subcases.

\begin{description}
\item[Case~1.1:~$B_i$ is a join bag] \hfill

Let~$x$ and~$y$ be the two children of~$i$. Therefore, it holds that $B_x=B_y=B_i$. In this case, we set ${\textsf{DT}}_i(U, f)=1$ if and only if ${\textsf{DT}}_x(U, f)=1$ and ${\textsf{DT}}_y(U, f)=1$.

\item[Case~1.2:~$B_i$ is an introduce bag] \hfill

Let~$x$ be the child of~$i$, and let~$B_i\setminus B_x=\{v\}$. (In this case,~$i$ cannot the root of~$T$) Notice that in this case,~$v$ does not have any neighbor in $G_i-B_i$. Then, we set ${\textsf{DT}}_i(U, f)=1$ if and only if $f(v)=0$ and ${\textsf{DT}}_x(U\setminus \{v\}, f|_{\neg v})=1$.

\item[Case~1.3:~$B_i$ is a forget node] \hfill

Let~$x$ be the child of~$i$, and let~$B_x\setminus B_i=\{v\}$. Then, we set ${\textsf{DT}}_i(U, f)=1$ if and only if there exists an entry ${\textsf{DT}}_x(U', f')=1$ such that $U'\setminus \{v\}=U$ and $f'|_{\neg v}=f$.
\end{description}

\item[Case~2: $B_i^{\text{P}}$ is a forget node] \hfill

Let $B_i\setminus B_i^{\text{P}}=\{v\}$.
We further divided into three subcases.
\begin{description}
\item[Case~2.1 $B_i$ is a join bag]\hfill

Let~$x$ and~$y$ denote the two children of~$i$ in~$T$. Then, we set ${\textsf{DT}}_i(U, f)=1$ if and only if ${\textsf{DT}}_x(U, f)={\textsf{DT}}_y(U, f)=1$ and, moreover, $g_v(n_G[v, U]+f(v))-c(v)\geq g_v(n_G(v, U))$ when $v\in U$ and $g_v(n_G(v, U)+f(v))\geq g_v(n_G(v, U)+f(v))-c(v)$ when $v\in B_i\setminus U$.

\item[Case~2.2 $B_i$ is an introduce bag] \hfill

Let~$x$ denote the child of~$i$ in~$T$. Let $B_i\setminus B_x=\{u\}$. Obviously,~$u$ does not have any neighbor in $G_i-B_i$. Therefore, if $f(u)>0$, we directly set ${\textsf{DT}}_i(U, f)=0$. Let us assume now that $f(u)=0$. Then, we set ${\textsf{DT}}_i(U, f)=1$ if and only if ${\textsf{DT}}_x(U\setminus \{u\}, f|_{\neg u})=1$ and, moreover, it holds that
$g_v(n_G[v, U]+f(v))-c(v)\geq g_v(n_G(v, U))$ when $v\in U$ and
$g_v(n_G(v, U))\geq g_v(n_G[v, U]+f(v))-c(v)$ when $v\in B_i\setminus U$.

\item[Case 2.3, $B_i$ is a forget bag] \hfill

Let~$x$ denote the child of~$i$ in~$T$, and let $B_i\setminus B_x=\{u\}$.
In this case, we set ${\textsf{DT}}_i(U, f)=1$ if and only if the following two conditions holds:
\begin{itemize}
\item there exists an entry ${\textsf{DT}}_x(U', f')=1$ such that $U'\setminus \{u\}=U$, $f'|_{\neg u}=f$ and
\item $g_v(n_G[v, U]+f(v))-c(v)\geq g_v(n_G(v, U))$ when $v\in U$ and
$g_v(n_G(v, U))\geq g_v(n_G[v, U]+f(v))-c(v)$ when $v\in B_i\setminus U$.
\end{itemize}
\end{description}
\end{description}

Recall that the root bag is empty. Therefore, there is only one entry in the table associated to the root. The game~$\mathcal{G}$ admits a PSNE profile if the only entry in the table of the root takes the value~$1$. A PSNE can be constructed using standard backtracking technique of dynamic programming algorithms if such a profile exists.

Finally, we analysis the running time of the algorithm. First, there are in total~$\bigo(n)$ bags where~$n$ denotes the number of all players in the given game. Let~$k$ be the treewidth of the given nice tree decomposition of the given network. For each bag~$B$, the set associated to~$B$ is of cardinality $2^{|B|}\cdot n^{|B|}= \bigos(2n^k)$, which can be constructed in~$\bigos(2n^k)$ time. Therefore, the running time of the algorithm is bounded by~$\bigos(2n^k)$, which is polynomial if~$k$ is a constant.
\end{proof}
}

\onlyfull{
A cluster is a disjoint of cliques. It is also called the~$P_3$-free graphs and 2-leaf powers.
The {\it{distance to a cluster}} of a graph is the minimum number of vertices need to be deleted to obtain a cluster.
We extend the result for clique in the following way.

\begin{theorem}
With respect to the distance to a cluster, all problems become {\fpt}.
\end{theorem}

{\bf{Remark.}} The above three NP-hardness results hold when the network has diameter at most 3.
}

\begin{theorem}[*]
{\prob{USWC}} is polynomial-time solvable if the treewidth of the given network is a constant.
\end{theorem}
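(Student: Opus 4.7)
The plan is to adapt the dynamic programming scheme from the bounded-treewidth {\prob{PSNEC}} algorithm, replacing the binary indicator by a maximum-utility accumulator over all players whose payoffs are already fully settled. Given a nice tree decomposition $(T, \mathcal{B})$ of the network $G$, for each bag $B_i$ I maintain a table ${\textsf{DT}}_i(U, f)$ indexed by a subset $U \subseteq B_i$ and a function $f : B_i \to \mathbb{N}_0$ with $f(v) \leq n_i(v)$, mirroring the indexing of the {\prob{PSNEC}} algorithm. The value stored is the maximum of $\sum_{v \in V_i \setminus B_i} \mu(v, {\bf{s}}, \mathcal{G}|_{V_i})$ over all profiles ${\bf{s}}$ of $\mathcal{G}|_{V_i}$ satisfying ${\bf{s}} \cap B_i = U$ and $\abs{N_G(v) \cap (V_i \setminus B_i) \cap {\bf{s}}} = f(v)$ for every $v \in B_i$, or $-\infty$ if no such profile exists. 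The key observation is that, by the standard property of tree decompositions, once a vertex leaves $B_i$ it never reappears, so all of its edges lie inside $G_i$ and its total number of investing closed neighbors in $G$ is fully known at the moment it is forgotten.

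The recursion follows the four node types of the nice tree decomposition. Leaf bags are empty, so ${\textsf{DT}}_i(\emptyset, \emptyset) = 0$. For an introduce node with child $x$ and new vertex $v \in B_i \setminus B_x$, the tree-decomposition property forces $V_i \setminus B_i = V_x \setminus B_x$ and guarantees that $v$ has no neighbor in $V_x \setminus B_x$; accordingly we set ${\textsf{DT}}_i(U, f) = {\textsf{DT}}_x(U \setminus \{v\}, f|_{B_x})$ if $f(v) = 0$ and $-\infty$ otherwise. For a join node with children $x, y$ where $B_x = B_y = B_i$, the sets $V_x \setminus B_x$ and $V_y \setminus B_y$ partition $V_i \setminus B_i$, so we take
\[
{\textsf{DT}}_i(U, f) \;=\; \max_{f_x + f_y = f}\bigl({\textsf{DT}}_x(U, f_x) + {\textsf{DT}}_y(U, f_y)\bigr),
\]
where the maximum is over all pointwise decompositions of $f$. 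The crucial transition is at a forget node whose child $x$ has $B_x = B_i \cup \{u\}$: here $u$ moves into the settled region, so we finally add its payoff to the running total. For every child entry ${\textsf{DT}}_x(U', f')$ with $U' \cap B_i = U$ and $f(v) = f'(v) + {\mathbf{1}}_{U'}(u) \cdot {\mathbf{1}}[u \in N_G(v)]$ for all $v \in B_i$, the total number of investing neighbors of $u$ in $G$ is $f'(u) + \abs{N_G(u) \cap B_i \cap U'}$, which determines $\mu(u, \cdot)$; then
\[
{\textsf{DT}}_i(U, f) \;=\; \max_{(U', f')}\bigl({\textsf{DT}}_x(U', f') + \mu(u, \cdot)\bigr).
\]
The answer is read off the unique entry ${\textsf{DT}}_{\text{root}}(\emptyset, \emptyset)$.

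The main obstacle is the bookkeeping at the forget transition: one must verify that after $u$ is forgotten its payoff really is fully determined by the child entry (which uses the tree-decomposition guarantee that every edge incident to $u$ lies in $G_x$), and that the $f$-values on the surviving bag vertices are properly adjusted to reflect the reclassification of $u$'s neighbors' "outside" region from $V_x \setminus B_x$ to $V_i \setminus B_i = (V_x \setminus B_x) \cup \{u\}$. For a nice tree decomposition of constant width $p$, each table contains $2^{p+1} \cdot (n+1)^{p+1} = \bigos(n^{p+1})$ entries; the join transition is computable entry-wise in $\bigos(n^{p+1})$ time by enumerating the split $f_x + f_y = f$, and the forget and introduce transitions are no slower. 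Since a nice tree decomposition has $O(n)$ bags, the overall running time is $\bigos(n^{O(p)})$, which is polynomial for constant $p$.
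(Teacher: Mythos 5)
Your proposal is correct and follows essentially the same dynamic program as the paper: a table over each bag indexed by the investing subset $U\subseteq B_i$ and a function $f$ recording, for each bag vertex, its number of investing neighbors among the already-forgotten vertices, with a player's utility finalized once all of its neighbors are settled. The only (harmless) differences are bookkeeping: you charge a player's utility at the forget node itself rather than at the child of the forget node as the paper does, and you make explicit the pointwise split $f=f_x+f_y$ at join nodes.
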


{
\begin{proof}
Let $\mathcal{G}=(V, E, g_V, c)$ be a BNPG, where~$(V, E)$ is a network of players,~$g_V$ is a set of externality functions of players in~$V$, one for each player, and $c: V\rightarrow \mathbb{R}_{\geq 0}$ is the cost function. For every player $v\in V$, let $g_v: \mathbb{N}_{0}\rightarrow \mathbb{R}_{\geq 0}$ denote its externality function in~$g_V$. Let~$G$ denote the network $(V, E)$, and  let $n=\abs{V}$ be the number of players. In addition, let $(T, \mathcal{B})$ be a nice-tree decomposition of~$G$ which is of polynomial size in~$n$ and of width at most~$p$ for some constant~$p$. For a node~$i$ in~$T$, let $B_i\in \mathcal{B}$ denote its associated bag. Moreover, let~$T_i$ denote the subtree of~$T$ rooted at~$i$, let~$G_i=G[\bigcup_{j\in \ver{T_i}} B_j]$ denote the subgraph of~$G$ induced by all vertices contained in bags associated to nodes in~$T_i$, and let~$V_i$ denote the vertex set of~$G_i$. For each nonroot bag $B_i\in \mathcal{B}$, we define~$B_i^{\text{P}}$ as the parent bag of~$B_i$. If~$B_i$ is the root bag, we define $B_i^{\text{P}}=\emptyset$.
For each bag~$B_i$ associated to a node~$i$ and each vertex~$v\in B_i$, we use~$n_i(v)$ to denote the number of neighbors of~$v$ in the subgraph~$G_i-B_i$.
In the following, we derive a dynamic programming algorithm to compute a profile of~$\mathcal{G}$ with the maximum possible USW.

For each bag~$B_i\in \mathcal{B}$, we maintain a dynamic table ${\textsf{DT}}_i(U, f)$ where~$U$ runs over all subsets of~$B_i$ and~$f$ runs over all functions $f: B_i\rightarrow \mathbb{N}_0$ such that~$f(u)\leq n_i(u)$ for every~$u\in B_i$.
A profile~${\bf{s}}$ of a the subgame $\mathcal{G}|_{V_i}$ is consistent with the tuple $(U, f)$ if among the players in~$B_i$ exactly those in~$U$ invest and, moreover, each $v\in B_i$ has exactly~$f(v)$ neighbors in $G_i- B_i$ who invest, i.e., ${\bf{s}}\cap B_i=U$ and $n_G(v, {\bf{s}}\setminus U)=f(v)$.
The entry ${\textsf{DT}}_i(U, f)$ is defined as
\[\max_{{\bf{s}}} \left(\sum_{v\in V_i\setminus B_i^{\text{P}}} \mu(v, \mathcal{G}|_{V_i}, {\bf{s}})\right),\]
where~${\bf{s}}$ runs over all profiles of the subgame $\mathcal{G}|_{V_i}$ that are consistent with $(U, f)$. If the subgame does not admit a profile which is consistent with $(U, f)$, ${\textsf{DT}}_i(U, f)=-\infty$.

To compute the values of the entries, we distinguish the following cases. Let ${\textsf{DT}}_i(U, f)$ denote the currently considered entry.

\begin{description}
\item[Case~1:~$B_i^{\text{P}}$ is a join or an introduce bag, or~$i$ is the root of~$T$] \hfill

We consider the following subcases.
\begin{description}
\item[Case 1.1:~$B_i$ is a join bag] \hfill

Let~$x$ and~$y$ denote the two children of~$i$ in the tree~$T$. Then, we define
\[{\textsf{DT}}_i(U, f)={\textsf{DT}}_x(U, f)+{\textsf{DT}}_y(U, f).\]

\item[Case~1.2 $B_i$ is an introduce bag] \hfill

Let~$x$ be the child of~$i$ in~$T$, and let $B_i\setminus B_x=\{v\}$. (In this case,~$i$ cannot be the root of~$T$.) Observe that in this case~$v$ does not have any neighbors in $G_i-B_i$. Hence, if $f(v)>0$, we let ${\textsf{DT}}_i(U, f)=-\infty$. Otherwise, it must be that $f(v)=0$, and we let
\[{\textsf{DT}}_i(U, f)={\textsf{DT}}_x(U\setminus \{v\}, f|_{\neg v}).\]

\item[Case 1.3: $B_i$ is a forget bag] \hfill

Let~$x$ be the child of~$i$ in~$T$ and let $B_x\setminus B_i=\{v\}$. Then, we define
\[{\textsf{DT}}_x(U, f)=\max_{\substack{U'\subseteq B_x~\text{s.t.}~U'\setminus \{v\}=U,\\ f':B_x\rightarrow \mathbb{N}_{0}~\text{s.t.}~f'|_{\neg v}=f}} \{{\textsf{DT}}_x(U', f')\}.\]
\end{description}

\item[Case~2:~$B_i^{\text{P}}$ is a forget bag] \hfill

Let $B_i\setminus B_i^{\text{P}}=\{v\}$. We consider the following subcases.
\begin{description}
\item[Case 2.1: $B_i$ is a join bag] \hfill

Let~$x$ and~$y$ denote the two children of~$i$ in~$T$. We define
\[{\textsf{DT}}_i(U, f)={\textsf{DT}}_x(U, f)+{\textsf{DT}}_y(U, f)+\pi(v),\]
where
\begin{equation*}
\pi(v)=
\begin{cases}
g_v(f(v)+n_G[v, U])-c(v) & v\in U\\
g_v(f(v)+n_G(v, U)) & v\in B_i\setminus U\\
\end{cases}
\end{equation*}

\item[Case 2.2: $B_i$ is an introduce bag] \hfill

Let~$x$ denote the child of~$i$ in~$T$. In addition, let $B_i\setminus B_x=\{u\}$. In this case,~$u$ does not have any neighbor in $G_i-B_i$, and hence if~$f(u)>0$ the game restricted to~$V_i$ does not have a profile which is consistent with $(U, f)$. Therefore, if $f(u)>0$ we directly set ${\textsf{DT}}_i(U, f)=-\infty$. Let us assume that $f(u)=0$. Then, we define
\[{\textsf{DT}}_x(U, f)={\textsf{DT}}_x(U\setminus \{u\}, f|_{\neg u})+\pi(v),\]
where~$\pi(v)$ is as defined in Case~2.1.

\item[Case 2.3: $B_i$ is a forget bag] \hfill

Let~$x$ denote the child of~$i$ in~$T$. In addition, let $B_x\setminus B_i=\{u\}$.
Let~$\pi(v)$ be as defined in Case~2.1. Then, we define
\end{description}
\[{\textsf{DT}}_x(U, f)=\pi(v)+\max_{\substack{U'\subseteq B_x~\text{s.t.}~U'\setminus \{u\}=U\\ f': B_i\rightarrow \mathbb{N}_{0}~\text{s.t.}~f'|_{\neg u}=f}}\{{\textsf{DT}}_x(U', f')\}.\]
\end{description}

The algorithm returns ${\textsf{DT}}_r(\emptyset, \emptyset)$, where~$r$ is the root. The dynamic programming algorithm runs in polynomial time as there are in total polynomially many entries and computing the value of each entry takes polynomial time.
\end{proof}
}

\begin{theorem}
{\prob{ESWC}} is polynomial-time solvable if the treewidth of the given network is a constant.
\end{theorem}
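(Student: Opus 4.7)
The plan is to combine the dynamic programming framework from the proof of the previous theorem (USWC on bounded treewidth) with the ``guess the threshold'' strategy used in the ESWC algorithm for critical clique forests. Let $\mathcal{G}=(V,E,g_V,c)$ be the input BNPG and let $(T,\mathcal{B})$ be a nice tree decomposition of $G=(V,E)$ of constant width~$p$. First, observe that the ESW of any profile equals $\mu(v,{\bf s},\mathcal{G})$ for some $v\in V$, and thus lies in the set $\{g_v(i)-\mathbf{1}(v\in{\bf s})\cdot c(v) \setmid v\in V,\ 0\leq i\leq n_G[v]\}$, which has polynomially many values. We iterate over all such candidate values~$q$ and, for each one, decide whether~$\mathcal{G}$ admits a profile in which every player obtains utility at least~$q$; the algorithm returns a profile achieving the largest such~$q$.

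For each fixed~$q$, I would adapt the USWC-on-treewidth DP to a binary table: for every bag $B_i\in\mathcal{B}$ and every pair $(U,f)$ with $U\subseteq B_i$ and $f:B_i\to\mathbb{N}_0$ satisfying $f(u)\leq n_i(u)$, set $\textsf{DT}_i(U,f)=1$ if and only if the subgame $\mathcal{G}|_{V_i}$ admits a profile~${\bf s}$ consistent with $(U,f)$ (that is, ${\bf s}\cap B_i=U$ and each $v\in B_i$ has exactly $f(v)$ investing neighbors in $G_i-B_i$) under which every player that has already been forgotten---namely, every $v\in V_i\setminus B_i$---obtains utility at least~$q$. The number of entries per bag is $2^{|B_i|}\cdot n^{|B_i|}=\bigos(2n^{p})$, which is polynomial because $p$ is constant.

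The update rules follow exactly the case analysis of the USWC proof, with two changes: arithmetic maxima/sums are replaced by logical disjunctions/conjunctions, and at each point where a player's total investing closed-neighborhood becomes fully determined we enforce a $\geq q$ check on its utility. Concretely, a player $v\in B_i$ is ``finalized'' precisely when $v$ is about to be removed, i.e., when we transition from a child bag $B_x$ to a forget bag $B_i=B_x\setminus\{v\}$; at that moment its complete investing closed-neighborhood count equals $n_G[v,U]+f(v)$ if $v\in U$ and $n_G(v,U)+f(v)$ if $v\in B_x\setminus U$ (using the child-bag's~$U$), and we require the corresponding utility to be at least~$q$ before setting the parent entry to~$1$. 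For the join and introduce cases we simply require the relevant child entries to be~$1$ (with the introduce bag additionally forcing $f(v)=0$ for the newly introduced vertex $v$, which has no neighbors in $G_i-B_i$). Since the root bag is empty, $\mathcal{G}$ admits a profile with ESW at least~$q$ if and only if the unique root entry is~$1$, and a concrete profile can be reconstructed by standard backtracking.

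The main obstacle is bookkeeping: making sure that the utility of every player is tested against~$q$ exactly once and at the right moment. The nice-tree-decomposition structure makes this clean, because every vertex is forgotten exactly once, so tying the ``$\geq q$'' check to the forget step guarantees that each player's final utility is evaluated using its true set of investing closed neighbors in~$G$. Correctness of the DP then follows by induction on~$T$ in the same manner as in the previous theorems, and the overall running time is polynomial: polynomially many candidate thresholds~$q$, polynomially many bags, and $\bigos(2n^{p})$ entries per bag, each computable in polynomial time.
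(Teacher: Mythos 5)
Your proposal is correct and follows essentially the same route as the paper: guess the threshold $q$ from the polynomially many candidate utility values, then run a binary dynamic program over the nice tree decomposition indexed by $(U,f)$, enforcing the ``utility at least $q$'' check exactly once per player at the moment it is forgotten (the paper anchors this check at the child of each forget bag rather than at the forget bag itself, but the two formulations are equivalent). The complexity accounting also matches the paper's.
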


\begin{proof}
Let $\mathcal{G}=(V, E, g_V, c)$ be a BNPG, where~$(V, E)$ is a network of players,~$g_V$ is a set of externality functions of players in~$V$, one for each player, and $c: V\rightarrow \mathbb{R}_{\geq 0}$ is the cost function. For every player $v\in V$, let $g_v: \mathbb{N}_{0}\rightarrow \mathbb{R}_{\geq 0}$ denote its externality function in~$g_V$. Let~$G$ denote the network $(V, E)$, and  let $n=\abs{V}$ be the number of players. In addition, let $(T, \mathcal{B})$ be a nice-tree decomposition of~$G$ which is of polynomial size in~$n$ and of width at most~$p$ for some constant~$p$. For a node~$i$ in~$T$, let $B_i\in \mathcal{B}$ denote its associated bag. Moreover, let~$T_i$ denote the subtree of~$T$ rooted at~$i$, and let~$G_i=G[\bigcup_{j\in \ver{T_i}} B_j]$ denote the subgraph of~$G$ induced by all vertices contained in bags associated to nodes in~$T_i$. For each nonroot bag $B_i\in \mathcal{B}$, we define~$B_i^{\text{P}}$ as the parent bag of~$B_i$. If~$B_i$ is the root bag, we define $B_i^{\text{P}}=\emptyset$.

For each bag~$B_i\in \mathcal{B}$, where $i\in \ver{T}$, and each vertex~$v\in B_i$, we use~$n_i(v)$ to denote the number of neighbors of~$v$ in the subgraph~$G_i-B_i$.
We derive an algorithm as follows.
The algorithm first guesses the ESW of the desired profile.
Note that we need only to consider at most $2n\cdot (n+1)$ possible values (the utility of every player has at most~$2(n+1)$ possible values and there are~$n$ players).
For each guessed value~$q$ of the ESW, we solve the problem of determining if~$\mathcal{G}$ admits a profile of ESW at least~$q$. This problem can be solved by the following dynamic programming algorithm running in polynomial time.

For each bag~$B_i\in \mathcal{B}$, we maintain a binary dynamic table $\textsf{DT}_i(U, f)$, where~$U$ runs over all subsets of~$B_i$ and~$f$ runs over all functions $f: B_i\rightarrow \mathbb{N}_0$ such that~$f(u)\leq n_i(u)$ for every~$u\in B_i$. Each entry ${\textsf{DT}}_i(U, f)$ is supposed to be~$1$ if and only if~$\mathcal{G}$ admits a profile such that
\begin{enumerate}
\item[(1)] among the players in~$B_i$, exactly those in~$U$ invest;
\item[(2)] every $v\in B_i$ has exactly~$f(v)$ neighbors in $G_i- B_i$ who invest; and
\item[(3)] everyone in~$B_i$ but not in~$B_i^{\text{P}}$ obtains utility at least~$q$, i.e.,
\begin{enumerate}
\item[(3.1)] for every $v\in U\setminus B_i^{\text{P}}$, it holds that $g_v(n_G[v, U]+f(v))-c(v)\geq q$; and
\item[(3.2)] for every $v\in (B_i\setminus U)\setminus B_i^{\text{P}}$, it holds that $g_v(n_G(v, U)+f(v))\geq q$.
\end{enumerate}
\end{enumerate}
We do not request players in~$B_i$ who also appear~$B_i^{\text{P}}$ to obtain the threshold utility~$q$ at the moment because we do not have the complete information over the number of their neighbors who invest at the moment. These players  will be treated when they leave their corresponding forget bags.
The tables are computed in a bottom-up manner, from those maintained for the leaf nodes up to that for the root node in~$T$. Assume that~$i$ is the currently considered node. We show how to compute $\textsf{DT}_i(U, f)$ as follows. Recall that every leaf bag is empty. So, if~$i$ is a leaf node, the table for~$i$ contains only one entry with the two parameters being an empty set and an empty function. We let this entry contain the value~$1$. Let us assume now that~$i$ is not a leaf node. We distinguish the following cases.

\begin{description}
\item[Case:~$B_i$ is a join bag] \hfill

Let~$x$ and~$y$ denote the two children of~$i$ in the tree~$T$.
If~$i$ is the root, or~$B_i^{\text{P}}$ is a join bag or an introduce bag, we set ${\textsf{DT}}_i(U, f)=1$ if and only if ${\textsf{DT}}_x(U, f)={\textsf{DT}}_y(U, f)=1$.
If~$B_i^{\text{P}}$ is a forget bag, let $B_i\setminus B_i^{\text{P}}=\{v\}$. Then, we set ${\textsf{DT}}_i(U, f)=1$ if and only if ${\textsf{DT}}_x(U, f)={\textsf{DT}}_y(U, f)=1$  and one of the following holds:
\begin{itemize}
\item $v\in U$ and $g_v(n_G[v, U]+f(v))-c(v)\geq q$; or
\item $v\in B_i\setminus U$ and $g_{v}(n_G(v, U)+f(v))\geq q$.
\end{itemize}

\item[Case:~$B_i$ is an introduce bag] \hfill

Let~$x$ be the child of~$i$ in~$T$.
Let $B_i\setminus B_x=\{v\}$. Note that~$v$ does not have any neighbor in $G_i-B_i$. Thus, if $f(v)>0$, we directly set ${\textsf{DT}}_i(U, f)$=0. We assume now that $f(v)=0$. We consider the following cases. First, if~$B_i^{\text{P}}$ is a join bag or an introduce bag, then ${\textsf{DT}}_i(U, f)=1$ if and only if ${\textsf{DT}}_x(U\setminus \{v\}, f|_{\neg \{v\}})=1$.
If~$B_i^{\text{P}}$ is a forget bag, let $B_i\setminus B_i^{\text{P}}=\{u\}$. 
Then, we set ${\textsf{DT}}_i(U, f)=1$ if and only if ${\textsf{DT}}_x(U\setminus \{v\}, f|_{\neg \{v\}})=1$ and one of the following holds (notice that when $u=v$ we have $f(u)=0$):
\begin{itemize}
\item $u\in U$ and $g_u(n_G[u, U]+f(u))-c(u)\geq q$;
\item $u\in B_i\setminus U$ and $g_u(n_G(u, U)+f(u))\geq q$.
\end{itemize}

\item[Case:~$B_i$ is a forget bag] \hfill

Let~$x$ be the child of~$i$ in~$T$.
Let $B_x\setminus B_i=\{v\}$. If~$i$ is the root, or~$B_i^{\text{P}}$ is a join or an introduce bag, then~${\textsf{DT}}_i(U, f)=1$ if and only if there is a ${\textsf{DT}}_x(U', f')=1$ such that $U'\cap B_i=U$ and $f'|_{\neg \{v\}}=f$.
If~$B_i^{\text{P}}$ is a forget bag, let $B_i\setminus B_i^{\text{P}}=\{u\}$.
Then, we set ${\textsf{DT}}_i(U, f)=1$ if and only if there is a ${\textsf{DT}}_x(U', f')=1$ such that $U'\cap B_i=U$, $f'|_{\neg \{v\}}=f$ and one of the following conditions holds:
\begin{itemize}
\item $u\in U$ and $g_u(n_G[u, U]+f(u))-c(u)\geq q$;
\item $u\in B_i\setminus U$ and $g_u(n_G(u, U)+f(u))\geq q$.
\end{itemize}
\end{description}

After computing the values of all tables, we conclude that the game~$\mathcal{G}$ admits a profile with ESW at least~$q$ if and only if ${\textsf{DT}}_r(\emptyset, \emptyset)=1$ where~$r$ is the root of~$T$.

To see that the algorithm runs in polynomial time, recall first that, as described above, the value of each entry in all tables can be computed in polynomial time. Moreover, there are polynomially many nodes in the tree~$T$, and for each node~$i$, the associated table~${\textsf{DT}}_i$ contains at most $2^{\abs{B_i}} \cdot (\max_{u\in B_i} n_i(u))^{\abs{B_i}}\leq 2^p \cdot n^p$ entries. The  running time follows then from the fact that~$p$ is a constant.

For the whole algorithm, we first find the maximum possible value~$q$ such that~$\mathcal{G}$ admits a profile of ESW at least~$q$, then using standard backtracking technique, a profile with ESW~$q$ can be computed in polynomial time based on the above dynamic programming algorithm.
\end{proof}

\end{document}